\newcommand{\cl}[1]{{\rm col}({#1})}
\newcommand{\mcl}[1]{{\rm mcol}({#1})}
\newcommand{\dfs}[2]{{\rm dfs}({#2};{#1})}
\newcommand{\rp}[1]{{\rm rep}({#1})}
\newtheorem{theorem}{Theorem}
\newtheorem{corollary}{Corollary}
\newtheorem{definition}{Definition}
\newtheorem{lemma}{Lemma}
\begin{document}

\title{Improved Approximation for Maximum Edge Colouring Problem}

\author{L Sunil Chandran}
\ead{sunil@iisc.ac.in}
\address{Department of Computer Science and Automation, Indian Institute of Science, Bangalore}

\author{Abhiruk Lahiri}
\ead{abhiruk@iisc.ac.in}
\address{Department of Computer Science and Automation, Indian Institute of Science, Bangalore}

\author{Nitin Singh}
\ead{nitisin1@in.ibm.com}
\address{IBM India Research Lab, Bangalore}

\begin{abstract}
The \emph{anti-Ramsey number}, $ar(G, H)$ is the minimum integer $k$ such that in any edge colouring of $G$ with $k$ colours there is a \emph{rainbow subgraph} isomorphic to $H$, namely, a copy of $H$ with each of its edges assigned a different colour. The notion was introduced by Erd{\"{o}}s and Simonovits in 1973. Since then the parameter has been studied extensively. The case when $H$ is a star graph was considered by several graph theorists from the combinatorial point of view. Recently this case received the attention of researchers from the algorithm community because of its applications in interface modelling of wireless networks. To the algorithm community, the problem is known as maximum edge $q$-colouring problem: Find a coloring of the edges of $G$, maximizing the number of colors satisfying the constraint that each vertex spans at most $q$ colors on its incident edges. It is easy to see that the maximum value of the above optimization problem equals
$ar(G, K_{1,{q+1}})-1$.

In this paper we study the maximum edge $2$-coloring problem from the approximation algorithm point of view.  The case $q=2$ is particularly interesting due to its application in real life problems. Algorithmically, this problem is known to be NP-hard for $q\ge2$. For the case of $q=2$, it is also known that no polynomial time algorithm can approximate to a factor less than $3/2$ assuming the unique games conjecture. Feng et al. showed a $2$-approximation algorithm for this problem. Later Adamaszek and Popa presented a $5/3$-approximation algorithm with the  additional assumption that the input graph has a perfect matching. Note that the obvious but the only known algorithm issues different colours to the edges of a maximum matching (say $M$)  and different colours to the  connected components of  $G \setminus M$ . In this article, we give a new analysis of the aforementioned  algorithm leading to an  improved approximation bound for triangle-free graphs with perfect matching. We also show a new lower bound when the input graph is triangle-free. The contribution of the paper is a completely new, deeper and closer analysis of how the optimum achieves higher number of colors than the matching based algorithm, mentioned above. 
\end{abstract}

\journal{arXiv.org}

\maketitle

\section{Introduction}
A $k$-\emph{edge colouring} of a graph is a function $f \colon E(G) \rightarrow [k]$. Note that $f$ doesn't need to be  a proper colouring of the edges, i.e., edges incident to the same  vertex may receive the same colour. A subgraph $H$ of $G$ is called a \emph{rainbow subgraph} (\emph{heterochromatic subgraph}) with respect to a $k$-edge colouring $f$ if all the edges of $H$ are coloured distinctly. For a pair of graphs $G$ and $H$ the \emph{anti-Ramsey number}, $ar(G, H)$, denotes the minimum number of colours $k$ such that in \emph {any}  $k$-edge colouring of $G$ there exists at least one subgraph isomorphic to $H$ which is a rainbow subgraph. Equivalently if $k'$ is the maximum possible number of colours in an edge colouring $f$  of $G$  such that  there exists no rainbow subgraph isomorphic to $H$ with respect to $f$ then $ar(G,H) = k'+1$. We call the first parameter of $ar(G,H)$,  $G$ as the \emph{input graph} and the second parameter $H$ as the \emph{pattern graph}. 

The notion, anti-Ramsey number, was introduced by Erd{\"{o}}s and Simonovits in 1973 \cite{Erdos}. Most of the initial research on this topic focused on complete graphs ($K_n$) as the input graph and pattern graphs that possesses certain nice structure, for example, path, cycle, complete graph etc. The exact expression of $ar(K_n, P_k)$, when the pattern graph is a path of length $k$ ($P_k$), was reported in the article written by Simonovits and S{\'o}s \cite{Simono}. Whereas the simple case of the pattern graph is a cycle of length $k$ ($C_k$) took years to get solved completely. It was proved by Erd{\"{o}}s, Simonovits and S{\'{o}}s that $ar(K_n, C_3) = n-1$ \cite{Erdos}. In the same paper it was conjectured that $ar(K_n, C_n) = (\frac{k-2}{2} + \frac{1}{k-1})n + O(1)$ for $k\geq 4$. The conjecture was verified affirmatively for the case $k =4$ by Alon \cite{Alon83}. Later it was studied by Jiang and West \cite{JiangW03}. Almost thirty years after it was conjectured, Montellano-Ballesteros and Neumann-Lara reported proof of the statement in 2005 \cite{MontelN05}. A lower bound considering the pattern graph as the clique of size $n-1$ ($K_{n-1}$)  was reported in \cite{ManousSTV96}. Schiermeyer and Montellano-Ballesteros together with Neumann-Lara independently reported the exact value of $ar(K_n, K_r)$  \cite{Schier04, MontelN02}. In the same article Schiermeyer also studied the case when pattern graph is a matching. Haas and Young later studied the case when pattern graph is a perfect matching \cite{HaasY12}. A tighter bound on matching was reported in the article by Fujita et al. \cite{FujitaKSS09}. The article by Jiang and West reported bounds on $ar(G, H)$ when $H$ is a tree \cite{JiangW04}. Jiang also derived an upper bound on $ar(G,H)$ when $H$ is a complete subdivided graph relating the parameter $ar(G,H)$ with \emph{Tur{\'{a}}n number}, that is the maximum cardinality of edges of an $n$-vertex graph that does not contain a subgraph $H$ \cite{JiangJCT02}. 

The study of anti-Ramsey number was not entirely restricted to the case when the input graph is a complete graph. Axenovich et al. studied the case when the input graph is a complete bipartite graph \cite{AxenoJK04}. A $t$-round variant of anti-Ramsey number was introduced and studied in \cite{BlokhuFGR01}.

In this paper, we consider the pattern graph as the claw graph, i.e. the star graph with exactly 3 leaves, denoted by $K_{1,3}$. Study of anti-Ramsey number where the pattern graph is the claw or more generally the star graph was initiated in the work of Manoussakis et al.  \cite{ManousSTV96}. The bound was later improved in \cite{Jiang02}. In the same article exact value of the bipartite variant of the problem $ar(K_{n,n}, K_{1,q})$ was also reported. Gorgol and Lazuka computed the exact value of $ar(G,H)$ when $H$ is  $K_{1,4}$ with an edge added  to it \cite{GorgolL10}. Montellano-Ballesteros relaxed the condition on input graph and considered any graph as input in their study \cite{Montel06}. The study of anti-Ramsey number with claw graph as pattern graph was revisited recently due to its application in modelling channel assignment in a network equipped with a multi-channel wireless interface \cite{Raniwala}. They introduced the problem as \emph {maximum  edge $q$-colouring problem}, thus initiating the exploration of the  algorithmic aspects of this parameter, $ar(G, K_{1,t})$.

For a graph $G$, an edge $q$-colouring of $G$ is an assignment of colours to edges of $G$ such that no more than $q$ distinct colours are incident at any vertex. An optimal edge $q$-colouring is one which uses the maximum number of colours. It is easily seen that the number of colours in maximum edge $q$-colouring of $G$ is $ar(G, K_{1, q+1})-1$.

In \cite{AdamP}, it was reported that the problem is $NP$-hard for every $q \geq 2$. Moreover, they showed that it is hard to approximate within a factor of $(1+ 1/q)$ for every $q \geq 2$, assuming the \emph{unique games conjecture}. A simple $2$-factor algorithm for maximum $2$-colouring problem was reported in \cite{Feng}. A description of the algorithm is provided in Algorithm \ref{algo01}. Henceforth we refer this algorithm as the \emph{matching based algorithm}. In a recent article \cite{AdamP}, authors reported a  $5/3$ approximation factor for the same algorithm assuming that the input graphs have a perfect matching. Approximation bounds for the matching based algorithm when the input graph has certain degree constraints were reported in \cite{Chandran}. A fixed-parameter tractable algorithm was reported for the case $q=2$ in \cite{Goyal}.

In the present article, our focus is on the case when $q=2$, and when the graph $G$ has a perfect matching. It is worth mentioning here, although Montellano-Ballesteros reported bounds on $ar(G, K_{1,q})$, their expression is not enough to draw any inference in this particular scenario. Their technique is useful for deriving bounds when the input graph has certain regular structures such as complete graph, complete $t$-partite graph, hypercube etc.
\begin{algorithm}
\caption{Matching based algorithm for edge $2$-colouring}
\label{algo01}
\begin{algorithmic}[1]
\REQUIRE A graph $G$ which has perfect matching.
\STATE Compute the maximum matching $M$ of $G$.  
\STATE Assign distinct colour to each edge of $M$.  
\STATE Assign a new colour to each component of $G \setminus M$.
\RETURN The colour assigned graph.
\end{algorithmic}
\end{algorithm}

\section{Key Notation and Main Result}
\label{sec:keynotation}
Throughout this article (except possibly the last section), we consider $G$ to be a graph which has a perfect matching $M$. We use $C_1,\ldots, C_h$ to denote the components of $G\backslash M$, where $h$ is the number of such components.

Let $\mathcal{C}$ be an optimal edge $2$-colouring of $G$ using colours $[c] := \{1,\ldots, c\}$. Let $\mathcal{C}_M$ and $\mathcal{C}_N$ denote the colours used in the matching $M$, and those not used in the matching $M$ respectively. Clearly $\mathcal{C}_M\uplus \mathcal{C}_N = [c]$. We call colours in $\mathcal{C}_M$ as \emph{matching} colours and colours in $\mathcal{C}_N$ as \emph{non-matching} colours. For an edge $e$, we denote the colour assigned to $e$ in colouring $\mathcal{C}$ by $\cl{e}$. For a vertex $u$, the colour assigned to the matching edge at $u$ is denoted by $\mcl{u}$.

For a colour $i\in [c]$, let $G[i]$ denote the subgraph spanned by the edges coloured $i$. We note that, in an optimal colouring $\mathcal{C}$,  $G[i]$ is connected for all colours $i\in [c]$, since otherwise we can increase the number of colours. For a non-matching colour $i$, $G[i]$ is a subgraph of a unique component $C_j$. For convenience, we often refer to a connected subgraph spanned by edges of a colour as \emph{colour component}. Subgraphs corresponding to matching colours are called \emph{matching colour components}, while those corresponding to non matching colours are called \emph{non matching colour components}. With the notation as discussed, the following are the main contributions of this paper:
\begin{theorem}
\label{thm:main}
Let $G$ be a graph having a perfect matching $M$. Let $OPT$ denote the number of  colors in an optimal max $2$-edge coloring of $G$. Then:
\begin{enumerate}[{\rm (a)}]
\item $OPT\leq \frac{5}{3}(|M|+h)$ \emph{\cite{AdamP}}.
\item $OPT\leq \frac{8}{5}(|M|+h)$ when $G$ is additionally triangle-free.
\end{enumerate}
\end{theorem}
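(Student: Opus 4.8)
Since part (a) is the theorem of Adamaszek and Popa \cite{AdamP}, the plan concerns part (b); assume throughout that $G$ is triangle-free, write $m=|M|$, fix an optimal edge $2$-colouring $\mathcal{C}$, and decompose $OPT=c=|\mathcal{C}_M|+|\mathcal{C}_N|$ as in Section~\ref{sec:keynotation}. The plan is to hand each of the $c$ colours a token of value $1$ and then slide these tokens onto the $m$ matching edges of $M$ and the $h$ components $C_1,\dots,C_h$ of $G\setminus M$ so that no matching edge and no component ever carries more than $\tfrac85$; summing over all recipients then yields $c\le\tfrac85(m+h)$. Equivalently, one is asked to exhibit quantitatively where the optimal colouring gains over the $m+h$ colours of the matching-based algorithm, and to show the gain is at most $\tfrac35(m+h)$.

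First I would isolate the structural facts the scheme relies on. Because every vertex sees at most two colours and one of them is its own matching colour, each vertex carries at most one non-matching colour; hence the non-matching colour components are pairwise vertex-disjoint --- inside any $C_j$, and therefore globally --- and each spans at least two vertices. The less obvious ingredient, and the one that makes the present analysis sharper than the earlier one, is a \emph{connecting-edge lemma}: if $e=vw$ is an edge of $G\setminus M$ whose endpoints lie in two \emph{distinct} non-matching colour components, then $\cl{e}=\mcl{v}=\mcl{w}$, so $G[\cl{e}]$ contains the two distinct matching edges at $v$ and at $w$; thus $\cl{e}$ is a matching colour with at least two matching edges (call such a colour \emph{heavy}). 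The proof is a short case check: $\cl{e}$ cannot coincide with the colour of either endpoint's colour component without identifying the two components. Combined with connectivity of $C_j$, this says every non-matching colour component in a $C_j$ that houses two or more of them is incident to a connecting edge, hence has a vertex whose matching edge belongs to a heavy matching colour.

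For the scheme itself, I would process each $C_j$ along a spanning tree (a DFS order on $C_j$ is convenient): for a matching colour $i$ with $k_i$ matching edges, split its token evenly, $1/k_i$ onto each of those edges, so an edge $e$ of colour $i$ is left with free capacity $R(e)=\tfrac85-\tfrac1{k_i}$, which equals $\tfrac35$ for a matching colour with one matching edge and is at least $\tfrac{11}{10}$ for a heavy one. A non-matching colour $t$ on a colour component with vertices $x_1,\dots,x_r$ ($r\ge2$) inside $C_j$ is paid by letting each matching edge $x_\ell x_\ell'$ offer $t$ half of its remaining capacity when $x_\ell'$ also carries a non-matching colour and all of it otherwise, and placing whatever part of the token still remains --- at most $1-\tfrac{3}{10}r$, hence $0$ once $r\ge4$, and at most $\tfrac{3}{20}$ as soon as one of the $x_\ell$ meets a connecting edge, since then $R(x_\ell x_\ell')\ge\tfrac{11}{10}$ by the lemma --- onto $C_j$. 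With the even-split rule the demand on any matching edge from the at most two non-matching colours at its endpoints never exceeds its capacity, so only the components $C_j$ need scrutiny: a $C_j$ with one non-matching colour component absorbs at most $\tfrac25<\tfrac85$, and one with $p_j\ge2$ of them absorbs at most $\tfrac{3}{20}$ per colour component, which stays under $\tfrac85$ up to a bounded number of colour components. Here triangle-freeness does its work: it forbids a matching edge $uv$ from having a common neighbour of $u$ and $v$, which is exactly the configuration that would let colour components cluster too tightly around a few matching edges, and it is what bounds the local density with which the small ($r=2$) non-matching colour components --- the only expensive ones --- can accumulate.

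The genuine obstacle is the case of a component $C_j$ with very many non-matching colour components. Then the connecting-edge lemma forces many heavy matching colours near $C_j$, hence many matching edges of free capacity $\ge\tfrac{11}{10}$ carrying at most one non-matching colour, and the surplus tokens that $C_j$ cannot hold must be rerouted onto these edges; making precise that enough such edges exist, that they are not already saturated, and that this can be done consistently across all components --- all under the triangle-free hypothesis, which is what keeps the relevant counts in balance --- is the combinatorial heart of the argument and the step I expect to require the most careful, case-by-case bookkeeping.
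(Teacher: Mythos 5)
Your plan is a local charging scheme, which is a genuinely different route from the paper (the paper runs a global counting argument: it exhibits $\sum_i(k_i-1)$ vertex pairs with repeated matching colours via a cascading sequence of forests, converts them into a lower bound on colour repetition inside $M$, and closes with two linear inequalities in $|\mathcal{C}|$). But the proposal has a genuine gap at its structural core. Your \emph{connecting-edge lemma} is correct as stated, but it only applies when two non-matching colour components are joined by a \emph{single edge} of $G\setminus M$. In general two such components $H_i$, $H_{i'}$ inside the same $C_j$ are joined only by paths of arbitrary length whose interior vertices lie in no non-matching colour component, and then no edge on the path has both endpoints in distinct colour components. The first edge $vu$ leaving $H_i$ does satisfy $\cl{vu}=\mcl{v}$, but nothing forces $\mcl{u}=\mcl{v}$; the repetition of a matching colour occurs somewhere strictly inside the path (this is the content of the paper's Lemma \ref{lem:pathlemma}), it need not involve any vertex of $H_i$ or $H_{i'}$, and the repeated colour need not be ``heavy'' in your sense at a vertex you can see locally. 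Consequently your claim that every non-matching colour component sharing a $C_j$ with another one ``has a vertex whose matching edge belongs to a heavy matching colour'' is false, and the capacity bound $R(x_\ell x_\ell')\ge \tfrac{11}{10}$ that your scheme leans on is unavailable exactly where it is needed. Making the repetitions along many overlapping connecting paths pairwise distinct is the hard combinatorial work (the paper's Lemmas \ref{lem:genrep}--\ref{lem:int_disjoint_forests}), and the charging scheme has no substitute for it.

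The second, admitted, gap is the case of a component $C_j$ containing many non-matching colour components: with your constants a $C_j$ absorbing $\tfrac{3}{20}$ per colour component already exceeds its cap of $\tfrac85$ once it houses roughly eleven of them, and the rerouting of the surplus onto heavy matching edges is precisely the step you defer. Since that is where the ratio $\tfrac85$ would actually be decided (and where triangle-freeness must enter quantitatively, as it does in the paper via Lemma \ref{lem:rctrianglefree} and the count $2|\mathcal{C}_N|+2\Delta+|\mathcal{L}_2|\le n$), the proposal as written does not yet constitute a proof of part (b).
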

 
\begin{corollary}
\label{cor:main}
Algorithm \ref{algo01} guarantees an approximation factor $5/3$ for graphs with perfect matching and an approximation factor of $8/5$ for triangle free graphs with perfect matching.
\end{corollary}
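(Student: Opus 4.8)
Algorithm~\ref{algo01} outputs a colouring with exactly $|M|+h$ colours ($|M|$ on the matching edges, $h$ on the components of $G\setminus M$), so Corollary~\ref{cor:main} is immediate from Theorem~\ref{thm:main}, and I concentrate on the theorem --- in particular on part~(b), taking part~(a) from \cite{AdamP} (I would in fact re-derive (a) inside the same framework as a warm-up). Fix an optimal $2$-edge-colouring $\mathcal C$ of $G$ with $OPT=c$ colours and, as in Section~\ref{sec:keynotation}, write $\mathcal C=\mathcal C_M\uplus\mathcal C_N$. Since every matching edge receives a matching colour, $|\mathcal C_M|\le|M|$ and hence $OPT\le|M|+|\mathcal C_N|$; but $|\mathcal C_N|$ alone can be as large as $|M|$, so this bound is useless on its own and must be combined with a sharper count of $|\mathcal C_M|$ --- the point being that the only way the optimum can pack many non-matching colours into one component is to force several matching edges to share a colour, which drives $|\mathcal C_M|$ down. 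I will therefore bound the sum $|\mathcal C_M|+|\mathcal C_N|$ directly, by charging each colour against the $|M|+h$ ``resources'' (one token per matching edge, one per component $C_j$) so that on average a colour costs at most $\tfrac58$ of a token; equivalently, I aim at $5\cdot OPT\le 8(|M|+h)$.

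I would first record the structural facts about $\mathcal C$. (i) Each colour class $G[i]$ is connected (stated in Section~\ref{sec:keynotation}). (ii) A non-matching colour class is a connected subgraph of $G\setminus M$, hence lies inside a single component $C_j$; and since at any vertex $v$ at most two colours occur, one of which is $\mcl v$, every vertex of a non-matching colour class $G[i]$ has its unique ``second colour'' equal to $i$. (iii) Consequently distinct non-matching colour classes are pairwise vertex-disjoint, each has at least two vertices, and every edge of $C_j$ that has an endpoint in some non-matching colour class but does not itself lie in that class carries a matching colour. Writing $n_j=|V(C_j)|$ and $t_j$ for the number of non-matching colours inside $C_j$, fact (iii) yields the trivial bound $t_j\le n_j/2$; note also $\sum_j n_j=2|M|$ and $|\mathcal C_N|=\sum_j t_j$.

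The heart of the argument is a local analysis of a single component $C_j$ together with the matching edges meeting it. The bound $t_j\le n_j/2$ is hopeless for large $C_j$; the key observation is that pushing $t_j$ towards $n_j/2$ forces $C_j$ to split almost entirely into two-vertex colour classes, whereupon fact (iii) makes the leftover edges of $C_j$ monochromatic around each such class, which propagates equalities $\mcl u=\mcl v$ across $C_j$ and so collapses many matching colours. I would make this quantitative: give $C_j$ a local budget of $\tfrac{n_j}{2}+1$ (its component token plus half of each matching edge with an endpoint in $C_j$), so that the local budgets sum to $|M|+h$; designate, for each matching colour appearing inside some component, one ``owner'' component; and prove the per-component inequality $\#\{\text{non-matching colours in }C_j\}+\#\{\text{matching colours owned by }C_j\}\le\tfrac85(\tfrac{n_j}{2}+1)$, charging a matching colour whose class lies entirely on $M$ to one of its own matching edges instead. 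Summing over $j$ gives $OPT\le\tfrac85(|M|+h)$, the delicate point being to choose the owners so that the inequality holds everywhere simultaneously.

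Triangle-freeness enters precisely in this local step. When $G$ has no triangle, the two matched endpoints of an $M$-edge have no common neighbour, every colour class is triangle-free, and the few local configurations in which the optimum is most colour-efficient relative to the matching algorithm --- informally, the places where an extra colour can be introduced inside $C_j$ without collapsing a matching colour --- contain a triangle and are therefore excluded; this is what lets the constant fall from $5/3$ to $8/5$. The plan is to classify the constrained local structures of $C_j$ and its incident matching edges under this restriction and verify the per-component inequality case by case. Making the case analysis exhaustive, and in particular keeping the global book-keeping of the owner assignments airtight so that the improvement is not dissipated in the amortization, is the step I expect to be the main obstacle; re-deriving part~(a) first should pin down exactly which configurations the triangle-free hypothesis removes.
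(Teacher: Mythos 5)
Your derivation of the corollary itself is correct and is exactly what the paper intends: Algorithm \ref{algo01} always produces a valid edge $2$-colouring with $|M|+h$ colours (one colour per matching edge, one per component of $G\setminus M$, and each vertex sees at most one of each kind), so both approximation factors follow immediately from Theorem \ref{thm:main}. Had you stopped there and invoked the theorem as proved in the paper, nothing more would be needed.

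However, you elect to re-derive the theorem, and that part of your proposal has a genuine gap: the entire quantitative content is deferred to a per-component charging inequality ($t_j+\#\{\text{owned matching colours}\}\le\tfrac85(\tfrac{n_j}{2}+1)$) together with an ``owner'' assignment and a case analysis that you never carry out, and which you yourself flag as the main obstacle. This is not a detail that can be waved through. First, the localization is problematic: a single matching colour $j$ can have its repeated matching edges induced by paths lying in \emph{several different} components of $G\setminus M$, so assigning $j$ to one owner component either loses repetition credit or double-counts it; likewise your fallback of charging an unowned matching colour ``to one of its own matching edges'' collides with the fact that both half-tokens of that edge have already been handed to components. Second, the step from ``many non-matching colours in $C_j$'' to ``quantified collapse of matching colours'' is precisely where the paper has to work hardest: it builds $\sum_i(k_i-1)$ colour-repetition pairs via cascading sequences of forests (Lemmas \ref{lem:genrep}--\ref{lem:int_disjoint_forests}), with the crucial structural properties that the first coordinates are distinct, the second coordinate is strictly larger in the forest order (giving $|S_j|\ge|P_j|+1$), and the connecting paths have disjoint interiors; it then converts pairs into the repetition content $\rp{S_j}=i(S_j;M)-1$ and closes the argument by balancing two \emph{global} inequalities (\eqref{eq:fifth} against \eqref{eq:sixth}, and \eqref{eq:seventh} against \eqref{eq:ten}), not by a per-component amortization. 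Your use of triangle-freeness is likewise only gestured at; in the paper it appears concretely as Lemma \ref{lem:rctrianglefree} (a low colour with $|S_j|=4$ forces an extra internal vertex on some $u\mathcal{F}v$ path, and matched repetition pairs force two internal vertices), which feeds an improved count of internal vertices into inequality \eqref{eq:nine}. Without an actual construction of the repetition pairs and a worked-out charging scheme, your plan does not yet constitute a proof of Theorem \ref{thm:main}(b), and hence not of the $8/5$ claim in the corollary.
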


\section{Overview}
We start with an overview of some structural observations about an optimal colouring which help us establish the approximation factor. Let $C$ be one of the components of $G\backslash M$ and let $H_i = G[i]$ and $H_j = G[j]$ be connected subgraphs of $C$ spanned by some non-matching colours $i$ and $j$. Note that $V(H_i) \cap V(H_j) = \phi$. Our next lemma shows that any $H_i$-$H_j$ path contains two distinct vertices $u$ and $v$ such that the matching edges incident at $u$ and $v$ have the same colour. Let us call $(u,v)$ as the \emph{colour repetition pair}.

\begin{lemma}\label{lem:pathlemma}
Let $u_0u_1\cdots u_k$ be a path in $G\backslash M$ such that $\mcl{u_0}=\cl{u_0u_1}$ and $\mcl{u_k}=\cl{u_{k-1}u_k}$. Then there exist $0\leq i < j\leq k$ such that $\mcl{u_i}=\mcl{u_j}$.
\end{lemma}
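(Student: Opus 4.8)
The plan is to argue by contradiction. Suppose the matching colours $\mcl{u_0},\ldots,\mcl{u_k}$ are pairwise distinct; I will contradict the fact that $\mathcal C$ is an edge $2$-colouring. To lighten notation set $a_\ell:=\mcl{u_\ell}$ for $0\le\ell\le k$ and $e_\ell:=\cl{u_{\ell-1}u_\ell}$ for $1\le\ell\le k$, so that the hypotheses read $e_1=a_0$ and $e_k=a_k$. The single local fact I would exploit is that at each internal vertex $u_\ell$ (with $1\le\ell\le k-1$) the path edges $u_{\ell-1}u_\ell$ and $u_\ell u_{\ell+1}$ together with the matching edge at $u_\ell$ are all incident at $u_\ell$, so $\{a_\ell,e_\ell,e_{\ell+1}\}$ comprises at most two distinct colours.

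First I would dispose of $k=1$ separately: then $u_{k-1}u_k=u_0u_1$, so the two hypotheses together give $\mcl{u_0}=\cl{u_0u_1}=\mcl{u_1}$, i.e.\ $(i,j)=(0,1)$ already works. So from now on assume $k\ge 2$.

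The core of the argument is a \emph{propagation} claim, proved by induction on $\ell$: for every $1\le\ell\le k-1$ one has $e_{\ell+1}\in\{a_0,a_1,\ldots,a_\ell\}$. For the base case $\ell=1$, the set $\{a_1,e_1,e_2\}=\{a_1,a_0,e_2\}$ has at most two colours while $a_0\ne a_1$ by assumption, so $e_2\in\{a_0,a_1\}$. For the inductive step, suppose $e_\ell\in\{a_0,\ldots,a_{\ell-1}\}$; then $e_\ell\ne a_\ell$ since the $a$'s are pairwise distinct, so the two-colour condition at $u_\ell$ forces $e_{\ell+1}\in\{a_\ell,e_\ell\}\subseteq\{a_0,\ldots,a_\ell\}$. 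Applying the claim with $\ell=k-1$ yields $e_k\in\{a_0,\ldots,a_{k-1}\}$, which contradicts $e_k=a_k$ together with the assumed distinctness of $a_0,\ldots,a_k$. Hence some $a_i=a_j$ with $i<j$, as required.

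I do not anticipate a genuine obstacle here. The only points needing care are the bookkeeping at the two endpoints — they are not internal vertices, so the two-colour constraint there involves only one path edge, which is exactly why the boundary hypotheses $e_1=a_0$ and $e_k=a_k$ are needed, respectively to seed the induction and to produce the final contradiction — together with the degenerate case $k=1$. One could alternatively propagate colour constraints inward from both endpoints and observe that the two sets of reachable colours cannot be disjoint, but the one-directional induction above seems the cleanest presentation.
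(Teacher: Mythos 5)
Your proof is correct. The local fact you isolate --- that at an internal vertex $u_\ell$ the three incident edges $u_{\ell-1}u_\ell$, $u_\ell u_{\ell+1}$ and the matching edge force $\{\mcl{u_\ell},\cl{u_{\ell-1}u_\ell},\cl{u_\ell u_{\ell+1}}\}$ to contain at most two colours --- is exactly the engine of the paper's proof as well, so the two arguments are close cousins; but the routes differ in form. The paper argues directly and constructively: it takes the maximal monochromatic prefix $u_0\cdots u_j$, observes that either $\mcl{u_0}=\mcl{u_j}$ (producing the pair) or the endpoint condition regenerates at $u_j$, and recurses on the suffix $u_j\cdots u_k$. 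You instead argue by contradiction, propagating the invariant $\cl{u_\ell u_{\ell+1}}\in\{\mcl{u_0},\ldots,\mcl{u_\ell}\}$ forward by induction on the index and colliding it with $\cl{u_{k-1}u_k}=\mcl{u_k}$ at the far end. Your version is arguably the cleaner self-contained write-up of this one lemma (a single forward induction, no recursion on subpaths). What the paper's constructive formulation buys is that it actually \emph{locates} the repetition pair as ``the first later vertex with the same matching colour,'' and this is precisely the map $f(u)$ used in the generalization to rooted trees (Lemma 2), where the pairs must additionally come with monochromatic connecting paths and disjointness properties; a pure existence-by-contradiction argument would not extend to that setting without being reworked. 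So your proof is a valid and tidy replacement here, but not a drop-in template for the later lemmas that build on this one.
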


\begin{proof}
For $k=1$ the lemma is obvious. Let $k\geq 2$. Let $j$ be maximal so that the path $u_0\cdots u_j$ is monochromatic. If $\mcl{u_0}=\mcl{u_j}$ then $0$ and $j$ satisfy the assertion of the lemma. Otherwise, clearly $j < k$ and $\cl{u_ju_{j+1}}=\mcl{u_j}$. Hence, the lemma follows by applying induction on the path $u_j\cdots u_k$.
\end{proof}  

We observe that any $H_i$-$H_j$ path satisfies the conditions in Lemma \ref{lem:pathlemma} at its end-points. This is because if $(u, v)$ is an edge coming out of the non-matching colour component $H_i$, with $u \in H_i$ and $v \notin H_i$, $\cl{uv}$ has to be the same as $\mcl{u}$. Now suppose $C$ has $k$ such non-matching colours. Then we can find at least $k-1$ paths connecting all of these colour components. If these paths are all disjoint (as in Figure \ref{fig:disjoint}), we would have $k-1$ colour repetition pairs by applying the Lemma \ref{lem:pathlemma} on each of these paths. Intuitively, a lot of such pairs should imply ``repetition'' of colours among matching edges, and hence help us bound the number of distinct matching colours. In later sections, we try to quantify the repetition implied by these colour repetition pairs. Our focus here and in the next section is to exhibit a large number of such pairs.

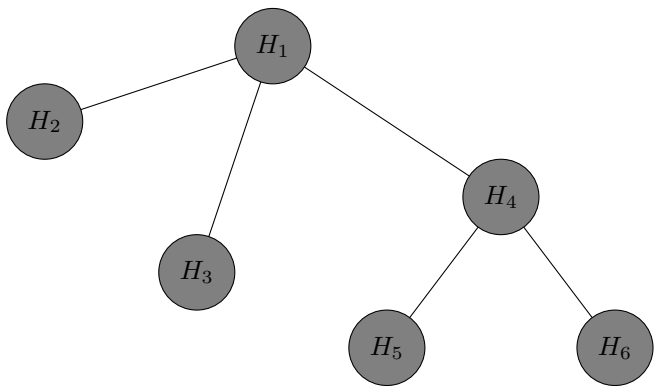
\begin{figure}[ht]
\centering
\begin{tikzpicture}
\coordinate (A) at (0,0);
\coordinate (B) at (-3,-1);
\coordinate (C) at (-1,-3);
\coordinate (D) at (3,-2);
\coordinate (E) at (1.5, -4);
\coordinate (F) at (4.5, -4);

\draw (A) -- (B) (A) -- (C) (A) -- (D) (D) -- (E) (D) -- (F);
\draw[fill=gray] (A) circle(0.5cm);
\draw[fill=gray] (B) circle(0.5cm);
\draw[fill=gray] (C) circle(0.5cm);
\draw[fill=gray] (D) circle(0.5cm);
\draw[fill=gray] (E) circle(0.5cm);
\draw[fill=gray] (F) circle(0.5cm);

\draw (A) node  {$H_1$};
\draw (B) node  {$H_2$};
\draw (C) node  {$H_3$};
\draw (D) node  {$H_4$};
\draw (E) node  {$H_5$};
\draw (F) node  {$H_6$};
\end{tikzpicture}
\caption{Circles denote subgraphs formed by non-matching colours. The figure shows the case when all the subgraphs can be connected by disjoint paths between them}
\label{fig:disjoint}
\end{figure}

Evidently, the non-matching components may not be nicely connected as in Figure \ref{fig:disjoint}. In the Section \ref{sec:colrep}, we show that one can still find $k-1$ distinct colour repetition pairs from a component containing $k$ non-matching colours. In the Section \ref{sec:repetcontent}, we estimate quantitatively the repetition in matching colours. The result is given in Lemma \ref{lem:repetitioncontent}. We use it to prove our main theorem in the same section. 

\section{Colour Repetition Pairs}
\label{sec:colrep}
We begin by generalizing the Lemma \ref{lem:pathlemma} to rooted trees, where the endpoint conditions in the Lemma \ref{lem:pathlemma} are satisfied at the root and the leaves of the tree. For a tree $T$, let $r(T)$ and $l(T)$ denote the root and the set of leaves of $T$ respectively. Further, we assume that $T$ has a depth first ordering, where $\dfs{T}{v}$ denotes the index of vertex $v \in V(T)$ in the ordering. We assume that when $v$ is a descendant of $u$ in $T$, we have $\dfs{T}{u} < \dfs{T}{v}$. Thus the root has minimum index. We use the depth first ordering to define an ordering $\preceq$ on vertices of $T$, where $u\preceq v$ if $\dfs{T}{u} \geq \dfs{T}{v}$. We will use $u \prec v$ to denote $u\preceq v$ with $u \neq v$. Note that this ordering contains the usual hierarchical ordering with root as the maximum element. This is same as the post order traversal of trees found in the literature. We are now in a position to state our next lemma. 

\begin{lemma}\label{lem:genrep}
Let $T$, $|V(T)| > 1$, be a rooted tree in $G\backslash M$ with $r := r(T)$ and $L := l(T)$. Further for $u\in L\cup \{r\}$, the colour of the edges in $T$ incident at $u$ is the same as the matching colour at $u$. Then we have a set $P := \{(u_i,v_i): i=1,\ldots, |L|\}$ of pairs of vertices of $T$ such that:
\begin{enumerate}[{\rm (a)}]
\item For $i\neq j$, we have $u_i\neq u_j$.
\item $u_i\prec v_i$ for all $1\leq i\leq |L|$.
\item $\mcl{u_i} = \mcl{v_i}$ for all $1\leq i\leq |L|$.
\item The path $u_iTv_i$ is monochromatic, with edges coloured $\mcl{u_i}$ for all $1\leq i\leq |L|$.
\item For an internal vertex $z$ in the path $u_iTv_i$, we have $\mcl{z}\neq \mcl{u_i}$ for $1\leq i\leq |L|$.
\item For $i\neq j$ and $u_iv_i, u_jv_j\in M$, the paths $u_iTv_i$ and $u_jTv_j$ do not share an internal vertex.
\end{enumerate}
\end{lemma}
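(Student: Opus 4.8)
The plan is to prove Lemma~\ref{lem:genrep} by induction on $|V(T)|$. The base case $|V(T)|=2$ is a single edge $r\ell$ with $\cl{r\ell}=\mcl{r}=\mcl{\ell}$, and the single pair $(\ell,r)$ satisfies (a)--(f) trivially, (a) and (f) being vacuous. For the inductive step I would isolate a ``pendant'' part of $T$ near the leaves, read off the pair(s) it contributes, delete it, apply the inductive hypothesis to the smaller tree, and take the union of the two pair sets.

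I would first handle the case $|L|=1$, where $T$ is a path $w_0w_1\cdots w_m$ with $w_0=r$ and $w_m$ the unique leaf; this is the strengthening of Lemma~\ref{lem:pathlemma} that we really need, and it reappears as a subroutine later. Mimic the proof of Lemma~\ref{lem:pathlemma}: chop off the maximal monochromatic prefix $w_0\cdots w_j$. If $\mcl{w_j}\neq\mcl{w_0}$, the two--colour constraint at $w_j$ forces the edge leaving $w_j$ to be coloured $\mcl{w_j}$, and we recurse on the strictly shorter path $w_j\cdots w_m$ (which is nonempty because the leaf condition at $w_m$ rules out $j=m$). If $\mcl{w_j}=\mcl{w_0}$, take $w_a$ among $w_1,\dots,w_j$ nearest to $w_0$ with $\mcl{w_a}=\mcl{w_0}$; then $(w_a,w_0)$ is a colour--repetition pair whose path is monochromatic of colour $\mcl{w_0}$ and has no interior vertex of that matching colour, giving (b)--(e), while (a) and (f) are vacuous.

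For $|L|\geq 2$ I would pick $b$ to be a \emph{deepest} branching vertex, so every proper descendant of $b$ has at most one child and the part of $T$ below $b$ is a ``spider'' of $s\geq 2$ vertex--disjoint paths $Q_1,\dots,Q_s$ from the children $b_1,\dots,b_s$ of $b$ down to leaves $\ell_1,\dots,\ell_s$. Walking upward from $\ell_k$ along the colour $\mcl{\ell_k}$: if the walk meets a vertex of matching colour $\mcl{\ell_k}$ while still inside $Q_k$, it yields a pair $(\ell_k,\cdot)$ supported entirely in $Q_k$. For the remaining (``escaping'') branches one invokes the two--colour constraint at $b$ together with the colour hypotheses at $r$ and the leaves: the edges $bb_k$, the edge from $b$ to its parent, and the matching edge at $b$ use at most two colours, and this rigidity lets one --- in the benign sub-cases by pairing an escaping leaf directly with $b$ or with the vertex one step above it, in the remaining sub-cases by a finer analysis --- harvest enough valid pairs from the spider to match the number of leaves of $T$ below $b$, each with its path localised. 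One then deletes the appropriate proper descendants of $b$, keeping one branch intact so that $b$ does not become a leaf (when $b=r$ there is no parent edge and one argues on the spider directly, with no recursion), checks the root/leaf colour hypotheses for the reduced tree, applies the inductive hypothesis, and unions the pair sets. Condition (a) is automatic because the new lower endpoints lie in the deleted part; condition (f) holds because every new pair's path is confined to one $Q_k$ or to $Q_k\cup\{b\}$, so the paths of matching--edge pairs stay internally disjoint from one another and from those the recursion produces.

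The main obstacle --- and essentially all the work --- is the colour bookkeeping at the branching vertex $b$: one must verify, over every distribution of the at most two colours among the edges at $b$ and every position of $\mcl{b}$ among them, that the number of valid pairs harvested from the spider equals the number of leaves of $T$ below $b$ and that each pair's supporting path is localised enough not to break (f) against the recursively produced pairs. The reason one is forced to work at a deepest branching vertex, rather than simply peel an arbitrary monochromatic pendant path, is that deleting a path with a branching vertex in its interior disconnects $T$; restricting the surgery to the spider below $b$ is what makes the case analysis there both necessary and delicate, and it is also where the hypothesis that the tree edges at $r$ and at each leaf carry the matching colour there gets used essentially.
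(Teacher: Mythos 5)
Your architecture (induct, peel a piece off near the leaves, harvest its pairs, recurse on the rest, take the union) matches the paper's in outline, and your $|L|=1$ case is exactly the right strengthening of Lemma~\ref{lem:pathlemma}. But the inductive step has a genuine gap, and it is precisely the step you yourself identify as ``essentially all the work'': the case analysis at the deepest branching vertex $b$ is never carried out, and the structural claims you assert about its outcome are false. The piece you peel off --- the spider below a deepest \emph{branching} vertex --- need not be monochromatic, so the upward walks from its leaves need not terminate inside it. Concretely, let $T$ be $r\, p\, b$ with $b$ having children $b_1,b_2$ and $b_i$ having leaf-child $\ell_i$; colour $\cl{\ell_ib_i}=\mcl{\ell_i}=c$, $\mcl{b_i}=a\neq c$, $\cl{b_ib}=a$, $\mcl{b}=d\neq a$, $\cl{bp}=a$, $\mcl{p}=\mcl{r}=\cl{pr}=d$. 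Every vertex sees at most two colours and the root/leaf hypotheses hold. Here no pair of the form $(\ell_i,\cdot)$, $(\cdot,b)$ or $(\cdot,p)$ is valid; the only two admissible pairs are supported on $\{b_1,b_2\}$ and on $\{p,r\}$. The first of these has path $b_1bb_2$ spanning \emph{two} legs of the spider, contradicting your claim that every harvested path is confined to one $Q_k\cup\{b\}$, so your ``benign sub-cases'' (pairing an escaping leaf with $b$ or with the vertex one step above) do not cover it. Worse, if the two legs escape into $b$ carrying \emph{different} colours, the surviving walk changes colour at $b$ and its repetition pair becomes $(b,\cdot)$ or $(p,\cdot)$ with its whole path inside the reduced tree; then the new lower endpoint does \emph{not} lie in the deleted part (so (a) is not ``automatic''), the witness for (e) must be sought above $b$, and the harvested paths share internal vertices with territory the recursion also uses, so your verification of (f) by localisation collapses. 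As written, the crux is a claim, not a proof.

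The paper sidesteps all of this by choosing the peeled-off piece by \emph{colour} rather than by branching: it takes a deepest vertex $v$ at which two tree-edge colours meet, so the subtrees $T_a,T_b$ hanging at $v$ (of colours $a=\mcl{v}$ and $b$) are each monochromatic. For a monochromatic subtree satisfying the endpoint hypotheses the pairs are immediate (match each leaf $u$ to the first ancestor with matching colour $a$), every harvested path stays inside that subtree, and the single vertex shared with the reduced tree is never internal to any harvested path --- which is exactly what makes (a) and (f) survive the union. (The paper still needs a small topological detour, via the nearest ancestor of degree more than $2$, in the sub-case where the peeled piece is a pendant path, and a separate argument when the colours at $v$ below are all different from $\mcl{v}$.) To salvage your route you would have to either show the spider can always be re-paired internally (the example above rules this out) or follow the escaping walks into the reduced tree and re-establish (a), (e), (f) globally --- at which point you are effectively forced back to a decomposition governed by where the colours change rather than where the tree branches.
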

We call the tuples in $P$ as the repetition pairs. Statements (b) and (c) can be seen as generalization of the notion of colour repetition pairs for paths in a tree. As in the proof of Lemma \ref{lem:pathlemma}, for a repetition pair $(u,v)$, $v$ was the vertex with the same incident matching colour as $u$ but with an index to the right of $v$ closer to root. Similarly in a tree, we have $v_i$'s ``closer'' to the root in depth first ordering than $u_i$'s. The statements (d)-(f) establish further structural properties of these repetition pairs that are needed subsequently. 
 
\begin{proof}[Proof of Lemma \ref{lem:genrep}]
We first prove the Lemma for the case when $T$ is monochromatic. Let $a$ be the colour of all the edges of $T$. From the colouring constraints that $T$ satisfies at the root and the leaves, it follows that $\mcl{u}=a$ for $u\in L\cup \{r\}$. For $u\in L$, define $f(u)$ to be the closest vertex to $u$ on the path $uTr$ such that $\mcl{f(u)}=\mcl{u}=a$. Observe that $f(u)$ exists for each $u$ as the root $r$ satisfies $\mcl{r}=a$. Let $P := \{(u,f(u)): u\in L\}$. Then $P$ is trivially seen to satisfy statements (a)-(e). To prove (f), we assume the contrapositive, i.e, there exist $u\neq v$ such that $uf(u)\in M$, $vf(v)\in M$ and the paths $u$-$f(u)$ and $v$-$f(v)$ in $T$ have a common internal vertex. Let $w$ be the common internal vertex. Then $u$ and $v$ are descendants of $w$, while $f(u), f(v)$ are ancestors of $w$. But $f(u)$ and $f(v)$ are the first vertices on $uTr$ and $vTr$ respectively with matching colour as $a$. Since the paths $uTr$ and $vTr$ share the sub-path $wTr$, it must be that $f(u)=f(v)$. But then their matching partners must also be the same, i.e., $u=v$, which contradicts the assumption that $u\neq v$. This contradiction proves (f).

Next consider the case when $T$ is not monochromatic. We proceed by induction. Let $v$ be a vertex of minimum height, such that $v$ witnesses edges of two colours in $T$. Let $a$ and $b$ be the colours incident at $v$. Without loss of generality, let $a$ be the colour of the matching edge at $v$, i.e., $\mcl{v}=a$. Let $T_v$ be the subtree of $T$ rooted at $v$. Let $T_a$ and $T_b$ be the subtrees of $T_v$ consisting of edges of colours $a$ and $b$ respectively. We consider three cases, viz:\medskip

\noindent\emph{$T_a$ is non-empty and $T_b$ is also non-empty}: Let $L_a$ denote the leaves in $T_a$. Now $T_a$ is monochromatic and satisfies the end point constraints required by the lemma. Thus we have the set $P_a$ with $|P_a|=|L_a|$ of pairs of vertices of $T_a$ satisfying the statements (a)-(f) because of the previous case. Let $T'$ be the tree obtained from $T$ by deleting all descendants of $v$ in $T_a$. Let $L'$ be the set of leaves in $T'$. Notice that $L_a\uplus L' = L$. Now, by induction hypothesis, we have a set $P'$ of pairs of vertices of $T'$ satisfying (a)-(f). Thus both $P_a$ and $P'$ satisfy the statements (a)-(f) with respect to the trees $T_a$ and $T'$. We claim that $P=P_a\cup P'$ satisfies (a)-(f) for the tree $T$. It is trivially seen that (a)-(e) hold. To see that (f) also holds we observe that only vertex that is possibly shared between $T_a$ and $T'$ is $u$, and $u$ is not an internal vertex of any $u'T_av'$ path in $T_a$ where $(u',v')\in P_a$.\medskip

\noindent\emph{$T_a$ is non-empty and $T_b$ is empty}: Let $w_0 = v$ and let $w_0, w_1, \dots w_j$ be the subpath of $vTr$ such that $w_j$ has degree more than $2$ in $T$ and height of $w_j$ is minimum among all the vertices with this property. Let the subtree rooted at $w_j$ which contains $v$ be $T_1$. Let $T'$ be the tree obtained from $T$ by removing all descendants of $w_j$ in $T_1$ (possibly $T'=\emptyset$). Since $w_j$ is not an internal vertex of $T'$ we see that $T'$ satisfies the end point constraints required by the lemma. Let leaves of $T'$ be $L'$. Observe that $L = L' \uplus L_a$. Now, by induction hypothesis, we have a set $P'$ of pairs of vertices of $T'$ satisfying (a)-(f) and a set $P_a$ of pairs of vertices of $T_a$ satisfying (a)-(f). We claim that $P=P_a\cup P'$ satisfies (a)-(f) for the tree $T$. It is trivially seen that (a)-(e) hold. To see that (f) also holds we observe that only vertex that is possibly shared between $T_a$ and $T'$ is $w_j$, and $w_j$ is not an internal vertex of any $u'T_av'$ path in $T_a$ where $(u',v')\in P_a$. \medskip  

\noindent\emph{$T_a$ is empty}: Clearly, $T_b$ must be non-empty in this case. Let $L_b$ denote the leaves in $T_b$. Note that the colour of incident edges at $v$ in $T_b$ is not the same as the matching colour at $v$, and hence we cannot define the pairs $(u,f(u))$ as in the monochromatic case. Let $w$ be the leaf such that $\dfs{T}{w}$ is maximum in $L_b$. For each $u\in L_b, u\neq w$ define $g(u)$ as the closest vertex to $u$ on the path $uT_bw$ such that $\mcl{g(u)}=b$. Again, observe that $g(u)$ exists for all $u\in L_b\backslash \{w\}$ because $\mcl{u} = \mcl{w} = b$. Define $P_b = \{(u, g(u)): u\in L_b\backslash \{w\}\}$. Statements (a)-(e) follow for $P_b$ (w.r.t the tree $T_b$) almost by definition of function $g$. Statement (f) can be proven in the following way. There is a unique path between any two pairs of vertices. Then at some point, the two paths have a vertex in common. Thus, the pairs $P_b$ satisfy (a)-(f) for the tree $T_b$, except that the number of pairs is one short of the number of leaves in $T_b$. We recover the deficit in the remaining tree. As before, let $T'$ be the tree obtained from $T$ by removing the descendants of $v$ in $T_b$. Let $L'$ denote the leaves in $T'$. Since $T_a$ was empty and $v$ witnesses two colours, we have $v\in L'$ and it is incident with an $a$-coloured edge in $T'$. By induction hypothesis, we have pairs $P'$ of vertices of $T'$ with $|P'|=|L'|$ satisfying statements (a)-(f). Again, we claim that $P=P_b\cup P'$ satisfy the requirements of the lemma for $T$. Statements (a)-(e) are easily verified. For (f), we note that $v$ is a leaf of $T'$ and hence is not an internal vertex of any path $u'T'v'$ in $T'$ for $(u',v')\in P'$. Thus, (f) also holds for the pairs $P$. Finally, we note that $|P|=|L|$, as we compensate the loss of leaf $w$ in $T_b$ with an extra leaf $v$ in $T'$ since $|P_b| = |L_b| - 1$ and $|P'| = |L| - |L_b| + 1$.
\end{proof}

Lemma \ref{lem:genrep} can be easily extended to forest consisting of rooted trees. Let $F$ be a forest containing rooted trees. Let $r(F)$ denote the set of roots of trees in the forest, and $l(F)$ denote the set of leaves in the forest $F$. We define a partial order $\preceq$ on the forest which restricts to the (total) order $\preceq_T$ (as defined earlier) on each component tree $T$. If two vertices belong to different component trees, they are incomparable under $\preceq$. As before $u \prec v$ denotes $u \preceq v$, but $u \neq v$. We now state the extension of Lemma \ref{lem:genrep} to forests. 

\begin{lemma}\label{lem:genrepforest}
Let $F$ be a forest in $G\backslash M$ with $R := r(F)$ and $L := l(F)$.
Suppose that for each $u\in R\cup L$, the colour of edges in $F$ incident at
$u$ is the same as the matching colour at $u$. Then, there exists set $P = \{(u_i,v_i): i = 1,\ldots,
|L|\}$ of pairs of vertices of $F$ satisfying:
\begin{enumerate}[{\rm (a)}]
\item For $i\neq j$, we have $u_i\neq u_j$.
\item $u_i\prec v_i$ for all $1\leq i\leq |L|$. In particular, the path
$u_iFv_i$ exists for all $1\leq i\leq |L|$.
\item $\mcl{u_i} = \mcl{v_i}$ for all $1\leq i\leq |L|$.
\item The path $u_iFv_i$ is monochromatic, with edges coloured $\mcl{u_i}$ for all $1\leq i\leq |L|$.
\item For an internal vertex $z$ in the path $u_iFv_i$, we have $\mcl{z}\neq \mcl{u_i}$ for $1\leq i\leq |L|$.
\item For $i\neq j$ and $u_iv_i, u_jv_j\in M$, the paths $u_iFv_i$ and $u_jFv_j$ do not share an internal vertex.
\end{enumerate}
\end{lemma}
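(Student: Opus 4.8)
The plan is to obtain $P$ by applying Lemma~\ref{lem:genrep} separately to each component tree of $F$ and taking the union of the pairs produced. First I would write $F$ as the disjoint union of its component trees $T_1,\ldots,T_m$, discarding at the outset any $T_\ell$ that is a single vertex: such a tree carries no incident edge, imposes no constraint, and (under the usual convention) contributes no leaf. For each remaining $T_\ell$ we have $|V(T_\ell)|>1$; moreover, since $F$ is a disjoint union, the vertices of $R\cup L$ lying in $T_\ell$ are precisely $r(T_\ell)$ together with $l(T_\ell)$, and every edge of $F$ meeting such a vertex lies in $T_\ell$. Hence $T_\ell$ satisfies the hypotheses of Lemma~\ref{lem:genrep}, which yields a set $P_\ell$ of $|l(T_\ell)|$ pairs satisfying (a)--(f) with respect to $T_\ell$ and the total order $\preceq_{T_\ell}$. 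I then set $P:=\bigcup_\ell P_\ell$.

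The next step is to verify that $P$ satisfies (a)--(f) for $F$. The key observation is that each pair in $P$ lies entirely inside a single component tree, so its two members are comparable under $\preceq$ (which restricts to $\preceq_{T_\ell}$ on $T_\ell$), and the path joining them in $F$ is exactly their path in that tree. Consequently statements (b), (c), (d) and (e), which refer only to a single pair $(u_i,v_i)$ and to the path $u_iFv_i$, transfer verbatim from the corresponding statements for the $P_\ell$. For (a): two indices $i\neq j$ either index pairs in the same $P_\ell$, where $u_i\neq u_j$ by Lemma~\ref{lem:genrep}, or pairs in different $P_\ell$'s, where $u_i$ and $u_j$ lie in different component trees and hence are distinct. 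The same case split settles (f): pairs in the same $P_\ell$ are covered by Lemma~\ref{lem:genrep}, while pairs from different components determine paths on disjoint vertex sets, which therefore cannot share an internal vertex. Finally, since $l(F)=\biguplus_\ell l(T_\ell)$, we get $|P|=\sum_\ell |P_\ell| = \sum_\ell |l(T_\ell)| = |L|$, as required.

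I do not expect a genuine obstacle here: this lemma is essentially a bookkeeping extension of Lemma~\ref{lem:genrep}. The only place that needs a little care is confirming that the partial order $\preceq$ on $F$, and the path ``$u_iFv_i$'' appearing throughout (b)--(f), are well behaved --- that is, that no repetition pair straddles two component trees --- and this is automatic since every pair is manufactured inside one $T_\ell$. The other point I would make explicit is the convention for trivial (single-vertex) components, so that the equality $|P|=|L|$ is not spoiled by counting isolated vertices as leaves.
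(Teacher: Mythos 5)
Your proposal is correct and follows exactly the paper's own (one-line) proof: apply Lemma~\ref{lem:genrep} to each component tree and take the union of the resulting pair sets. The extra care you take with single-vertex components and the cross-component cases of (a) and (f) is sound and only makes explicit what the paper leaves implicit.
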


\begin{proof}
The proof follows by taking union of pairs satisfying Lemma \ref{lem:genrep} for each component tree in the forest.
\end{proof}
  
Let us return to the question of finding $k-1$ color repetition pairs in a component $C \in G \setminus M$ containing $k$ non-matching color components. We could use Lemma \ref{lem:pathlemma}, when the non-matching components can be connected using pairwise disjoint paths as in Figure~\ref{fig:disjoint}. Lemma \ref{lem:genrepforest} allows us to exhibit $k-1$ repetition pairs, as long as we have a non-matching color component $H^*$ from which we can reach all other $k-1$ non-matching color components without passing through other colour components. Figure \ref{fig:treelike} illustrates such an arrangement, and the
corresponding forest $F$ to which we can apply Lemma \ref{lem:genrepforest}. Note that root of both trees in the forest is in $H^*$. The $H^\ast H_i$ path doesn't intersect with any other colour component other than $H^\ast$ and $H_i$. For each $i\neq j$, we can construct a $H_i$-$H_j$ path avoiding vertices of other colour component $H_l$ where $l\not\in \{i,j\}$.

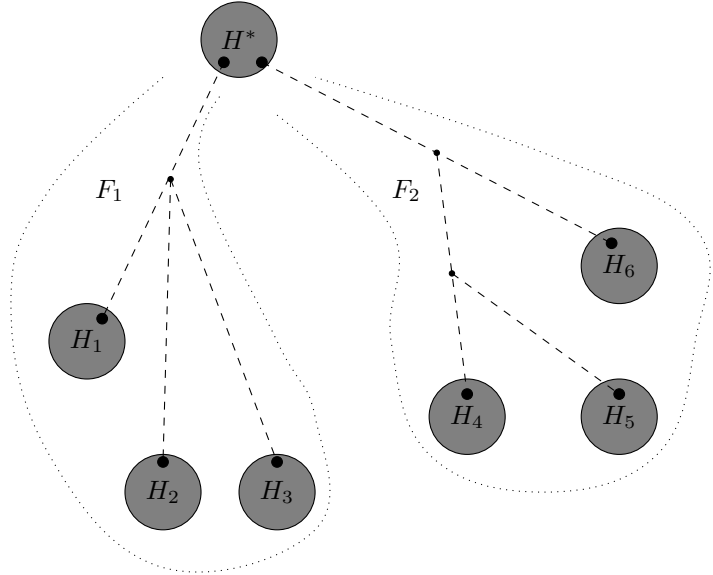
\begin{figure}[ht]
\centering
\begin{tikzpicture}
\coordinate (A) at (0,0);
\coordinate (B) at (-2, -4);
\coordinate (C) at (-1, -6);
\coordinate (D) at (0.5, -6);
\coordinate (E) at (5, -3);
\coordinate (F) at (3, -5);
\coordinate (G) at (5, -5);

\coordinate (L1) at (-1.7,-2);
\coordinate (L2) at (2.2,-2);

\coordinate (A1) at (-0.2,-0.3);
\coordinate (A2) at (0.3, -0.3);
\path (B) -- ++(0.2,0.3) coordinate (B1);
\path (C) -- ++(0.0, 0.4) coordinate (C1);
\path (D) -- ++(0.0, 0.4) coordinate (D1);
\path (E) -- ++(-0.1,0.3) coordinate (E1);
\path (F) -- ++(0.0, 0.3) coordinate (F1);
\path (G) -- ++(0.0, 0.3) coordinate (G1);

\draw[fill = gray] (A) circle(0.5cm);
\draw[fill = gray] (B) circle(0.5cm);
\draw[fill = gray] (C) circle(0.5cm);
\draw[fill = gray] (D) circle(0.5cm);
\draw[fill = gray] (E) circle(0.5cm);
\draw[fill = gray] (F) circle(0.5cm);
\draw[fill = gray] (G) circle(0.5cm);

\draw[fill = black] (A1) circle(2pt);
\draw[fill = black] (A2) circle(2pt);
\draw[fill = black] (B1) circle(2pt);
\draw[fill = black] (C1) circle(2pt);
\draw[fill = black] (D1) circle(2pt);
\draw[fill = black] (E1) circle(2pt);
\draw[fill = black] (F1) circle(2pt);
\draw[fill = black] (G1) circle(2pt);

\coordinate (BC) at ($(A)!0.5!(B1)$);

\draw [dashed] (A1) -- (B1) (BC) -- (C1) (BC) -- (D1);
\draw[fill=black] (BC) circle(1pt);

\coordinate (AE) at ($(A2)!0.5!(E1)$);
\coordinate (AEF) at ($(AE)!0.5!(F1)$);

\draw [dashed] (A2) -- (E1) (AE) -- (F1) (AEF) -- (G1);

\draw[fill=black] (AE) circle(1pt);
\draw[fill=black] (AEF) circle(1pt);

\draw (A) node  {$H^*$};
\draw (B) node  {$H_1$};
\draw (C) node  {$H_2$};
\draw (D) node  {$H_3$};
\draw (E) node  {$H_6$};
\draw (F) node  {$H_4$};
\draw (G) node  {$H_5$};

\draw [dotted] plot[smooth, tension=.7] coordinates {(-1,-0.5) (-2.5,-2) (-3,-3.5) (-2.5,-5.5) (-1,-7) (1,-6.5) (1,-5) (0.5,-4) (-0.5,-1.8) (-0.25,-0.75)};
\draw [dotted] plot[smooth, tension=.7] coordinates {(0.5,-1) (2,-2.5) (2,-4) (2.5,-5.5) (4,-6)(5.5,-5.5) (6,-4)(6,-2.5)(4,-1.5) (1,-0.5)};

\draw (L1) node {$F_1$};
\draw (L2) node {$F_2$};
\end{tikzpicture}
\caption{Non-matching colour components which can be connected in a tree-like fashion from a component. The dashed lines denote paths in $G\backslash M$, where we have not explicitly marked all vertices outside the non-matching components. The requisite number of repetition pairs are obtained by applying Lemma \ref{lem:genrepforest} to the forest consisting of dashed trees.}
\label{fig:treelike}
\end{figure}

Our final technical effort in this section is to show that we can find $k-1$ repetition pairs in a component with $k$ non-matching colours, even if we do not have a non-matching colour component, from which all other non-matching components are accessible without passing through some vertex of an intermediate component. The situation is illustrated in Figure \ref{fig:noroot}.

\begin{figure}[ht]
\centering
\begin{tikzpicture}
\coordinate (A) at (0,0);
\coordinate (B) at (-2, -4);
\coordinate (C) at (-1, -6);
\coordinate (D) at (0.5, -6);
\coordinate (E) at (5, -3);
\coordinate (F) at (3, -5);
\coordinate (G) at (5, -5);

\coordinate (A1) at (-0.2,-0.3);
\coordinate (A2) at (0.3, -0.3);
\path (B) -- ++(0.2,0.3) coordinate (B1);
\path (C) -- ++(0.0, 0.4) coordinate (C1);
\path (D) -- ++(0.0, 0.4) coordinate (D1);
\path (E) -- ++(-0.1,0.3) coordinate (E1);
\path (F) -- ++(0.0, 0.3) coordinate (F1);
\path (G) -- ++(0.0, 0.3) coordinate (G1);

\draw[fill = gray] (A) circle(0.5cm);
\draw[fill = gray] (B) circle(0.5cm);
\draw[fill = gray] (C) circle(0.5cm);
\draw[fill = gray] (D) circle(0.5cm);
\draw[fill = gray] (E) circle(0.5cm);
\draw[fill = gray] (F) circle(0.5cm);
\draw[fill = gray] (G) circle(0.5cm);

\draw[fill = black] (A1) circle(2pt);
\draw[fill = black] (A2) circle(2pt);
\draw[fill = black] (B1) circle(2pt);
\draw[fill = black] (C1) circle(2pt);
\draw[fill = black] (D1) circle(2pt);
\draw[fill = black] (E1) circle(2pt);
\draw[fill = black] (F1) circle(2pt);
\draw[fill = black] (G1) circle(2pt);

\coordinate (BC) at ($(A)!0.5!(B1)$);

\draw [dashed] (A1) -- (B1) (BC) -- (C1) (BC) -- (D1);
\draw[fill=black] (BC) circle(1pt);

\coordinate (AE) at ($(A2)!0.5!(E1)$);
\coordinate (AEF) at ($(AE)!0.5!(F1)$);

\draw [dashed] (A2) -- (E1) (AE) -- (F1) (AEF) -- (G1);

\draw[fill=black] (AE) circle(1pt);
\draw[fill=black] (AEF) circle(1pt);

\path (C1) -- ++(0,-0.8) coordinate (C2);

\path (C2) -- ++(-1,-2) coordinate (H);
\path (C2) -- ++(1,-2) coordinate (I);

\path (G1) -- ++(-1,-2) coordinate (J);
\path (G1) -- ++(1,-2) coordinate (K);
\path (G1) -- ++(0,-0.6) coordinate (G2);

\path (H) -- ++(0,0.4) coordinate (H1);
\path (I) -- ++(0,0.4) coordinate (I1);
\path (J) -- ++(0,0.4) coordinate (J1);
\path (K) -- ++(0,0.4) coordinate (K1);

\draw[fill = gray] (H) circle(0.5cm);
\draw[fill = gray] (I) circle(0.5cm);
\draw[fill = gray] (J) circle(0.5cm);
\draw[fill = gray] (K) circle(0.5cm);

\draw[fill = black] (C2) circle(2pt);
\draw[fill = black] (H1) circle(2pt);
\draw[fill = black] (I1) circle(2pt);
\draw[fill = black] (J1) circle(2pt);
\draw[fill = black] (K1) circle(2pt);
\draw[fill = black] (G2) circle(2pt);

\draw [thick] (C2) -- (H1) (C2) -- (I1) (G2) -- (J1) (G2) -- (K1);

\draw (A) node {$H_1$};
\draw (B) node {$H_2$};
\draw (C) node {$H_3$};
\draw (D) node {$H_4$};
\draw (E) node {$H_5$};
\draw (F) node {$H_6$};
\draw (G) node {$H_7$};
\draw (H) node {$H_8$};
\draw (I) node {$H_9$};
\draw (J) node {$H_{10}$};
\draw (K) node {$H_{11}$};

\end{tikzpicture}
\caption{There is no component which can directly reach all other components. Choosing $H_1$ as root, the first level forest $F_1$ consists of two trees indicated by thin dashed lines. Choosing leaf components $H_3, H_7$ in $F_1$ as root components, we reach other components using forest $F_2$ denoted by thick lines.}
\label{fig:noroot}
\end{figure}
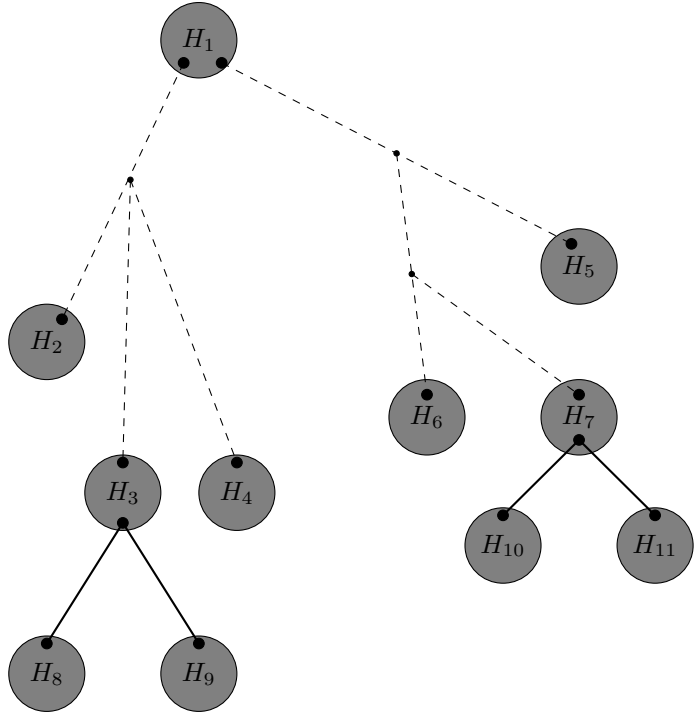

\subsection{Cascading sequence of forests}
\label{sec:cascading}
We call an ordered pair $(T_1, T_2)$ of rooted trees to be a \emph{cascading pair} if $T_1$ and $T_2$ are vertex disjoint or $r(T_2) \in l(T_1)$ and $|V(T_1) \cap V(T_2)| = 1$. We call $\mathcal{F}=\{F_1,\ldots, F_l\}$ a \emph{cascading sequence of forests} if  (i) $V(F_i) \cap V(F_j)=\emptyset$ for $|i -j| \geq 2$ and (ii) For, $T_i \in F_i$ and $T_{i+1} \in F_{i+1}$, the pair $(T_i, T_{i+1})$ is a cascading pair for $1 \leq i \leq l-1$.

Let $\mathcal{F}$ be a cascading sequence of forests. We use the notation $V(\mathcal{F}) := \bigcup_{F\in \mathcal{F}} V(F)$ to denote the vertices in the collection $\mathcal{F}$. We call $v\in V(\mathcal{F})$ to be an \emph{internal} vertex of $\mathcal{F}$ if $v$ is an internal vertex of some (at most one) forest $F\in \mathcal{F}$. The notation ${\rm Int}(\mathcal{F})$ will denote the set of internal vertices in the collection $\mathcal{F}$. 
We record the following easy observation as a lemma.

\begin{lemma}\label{lem:int_dis_forest_family}
Let $\mathcal{F}$ be a cascading sequence of forests. Then for any two distinct vertices $x,y$, there exists at most one forest $F\in \mathcal{F}$ such that $F$ contains the $x$-$y$ path.
\end{lemma}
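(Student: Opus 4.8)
The plan is to argue by contradiction, and the whole argument is essentially a direct unwinding of the two defining conditions of a cascading sequence of forests. So suppose, towards a contradiction, that there are two \emph{distinct} forests $F_i, F_j \in \mathcal{F}$, say with $i < j$, each of which contains the $x$-$y$ path. Since $x \neq y$, this path has at least two vertices, and both of them lie in $V(F_i)$ as well as in $V(F_j)$; hence $\{x,y\} \subseteq V(F_i)\cap V(F_j)$ and in particular $|V(F_i)\cap V(F_j)| \geq 2$.

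The first step is to rule out the non-consecutive case. By condition (i) in the definition of a cascading sequence, if $|i-j|\geq 2$ then $V(F_i)\cap V(F_j) = \emptyset$, which contradicts $\{x,y\}\subseteq V(F_i)\cap V(F_j)$. Therefore $j = i+1$, i.e.\ the two forests are consecutive in the sequence.

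The second step is to locate the trees carrying the path. Since a forest is a vertex-disjoint union of trees, the vertex $x$ lies in a unique tree $T_i$ of $F_i$; and because the $x$-$y$ path lies entirely inside $F_i$, the vertex $y$ must lie in the same tree $T_i$ (the path connects $x$ to $y$ within $F_i$, so $x$ and $y$ are in the same component of $F_i$). In exactly the same way, $x$ and $y$ both lie in a single tree $T_{i+1}$ of $F_{i+1}$. Consequently $\{x,y\}\subseteq V(T_i)\cap V(T_{i+1})$, so $|V(T_i)\cap V(T_{i+1})| \geq 2$.

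The final step is to contradict the cascading-pair property. By condition (ii), $(T_i, T_{i+1})$ is a cascading pair, so either $T_i$ and $T_{i+1}$ are vertex-disjoint — impossible, since they share $x$ — or else $r(T_{i+1})\in l(T_i)$ and $|V(T_i)\cap V(T_{i+1})| = 1$, which contradicts $|V(T_i)\cap V(T_{i+1})| \geq 2$ established above. Either alternative gives a contradiction, completing the proof. I do not expect any genuine obstacle here; the only points that need a line of care are the standing conventions that a forest is a disjoint union of trees and that the phrase ``$F$ contains the $x$-$y$ path'' is to be read as ``$x$ and $y$ lie in a common tree of $F$,'' so that the path is in fact contained in a single component tree — everything else is a mechanical application of conditions (i) and (ii).
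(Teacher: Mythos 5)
Your proof is correct. The paper actually records this lemma as an ``easy observation'' and gives no proof at all, so there is nothing to compare against; your argument --- reduce to consecutive forests via condition (i), locate the single component trees $T_i$ and $T_{i+1}$ carrying the path, and then contradict the cascading-pair condition $|V(T_i)\cap V(T_{i+1})|\leq 1$ --- is exactly the intended direct unwinding of the definitions, and the one convention you flag (that ``$F$ contains the $x$-$y$ path'' means $x$ and $y$ lie in a common tree of $F$) is indeed the reading the paper uses.
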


\begin{definition}[Order on cascading sequence of forests]\label{defn:order}
Let $\mathcal{F}$ be an cascading sequence of forests. We define the partial order $\preceq_\mathcal{F}$ on $\bigcup_{F\in \mathcal{F}} V(F)$ as:  $x\preceq_\mathcal{F} y$ if and only if $x\preceq_F y$ for some (at-most one) $F\in \mathcal{F}$. Otherwise $x$ and $y$ are incomparable under $\preceq_\mathcal{F}$. Transitive closure of this relation is the order we consider. (By abuse of notation, we will use $\preceq_\mathcal{F}$ for transitive closure of this relation.)
\end{definition}

For a cascading sequence of forests $\mathcal{F}$ and vertices $x$ and $y$, the notation $x\mathcal{F}y$ denotes the path $xFy$ if there exists a forest (at-most one) $F\in \mathcal{F}$ such that $F$ contains $x$-$y$ path. We now state a version of Lemma \ref{lem:genrepforest} for a cascading sequence of forests.

\begin{lemma}\label{lem:genrepforest_collection}
Let $\mathcal{F}$ be a cascading sequence of forests in $G\backslash M$. Then, there exist pairs $P = \{(u_i,v_i): i = 1,\ldots, L\}$ of vertices in $V(F)$ where $L := \sum_{F\in \mathcal{F}} |l(F)|$,  satisfying:
\begin{enumerate}[{\rm (a)}]
\item For $1\leq i < j \leq L$, we have $u_i\neq u_j$.
\item $u_i\prec_\mathcal{F} v_i$ for all $1\leq i\leq L$. In particular for all $1\leq i\leq L$, the path $u_i\mathcal{F}v_i$ exists.
\item $\mcl{u_i} = \mcl{v_i}$ for all $1\leq i\leq L$.
\item The path $u_i\mathcal{F}v_i$ is monochromatic, with edges coloured $\mcl{u_i}$ for all $1\leq i\leq L$.
\item For an internal vertex $z$ in the path $u_i\mathcal{F}v_i$, we have $\mcl{z}\neq \mcl{u_i}$ for $1\leq i\leq L$.
\item For $i\neq j$ and $u_iv_i, u_jv_j\in M$, the paths $u_i\mathcal{F}v_i$ and $u_j\mathcal{F}v_j$ do not share an internal vertex.
\end{enumerate}
\end{lemma}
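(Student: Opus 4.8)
The plan is to obtain the pairs $P$ for $\mathcal{F}=\{F_1,\ldots,F_l\}$ by applying Lemma~\ref{lem:genrepforest} separately to each forest $F_i$ and taking the union. (As in Lemma~\ref{lem:genrepforest}, the statement is to be read with the endpoint--colour hypothesis --- for every vertex that is a root or a leaf inside some $F_i$, the colour of its incident $F_i$-edges equals its matching colour --- which then holds for each $F_i$ individually.) Applying Lemma~\ref{lem:genrepforest} to $F_i$ produces a set $P_i$ of $|l(F_i)|$ pairs satisfying (a)--(f) with respect to $F_i$; I would set $P:=\bigcup_{i=1}^{l}P_i$ and verify (a)--(f) for $\mathcal{F}$ together with disjointness of the union, which gives $|P|=\sum_i|l(F_i)|=L$.

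Before checking the six properties I would isolate the structural fact about overlaps: by clause (i) of the definition of a cascading sequence, $F_i$ and $F_j$ are vertex disjoint when $|i-j|\ge 2$; and by the cascading--pair condition, every vertex lying in both $F_i$ and $F_{i+1}$ is the root of some tree of $F_{i+1}$ and simultaneously a leaf of some tree of $F_i$. Given this, properties (b)--(e) are inherited with essentially no work. Indeed $\preceq_{\mathcal{F}}$ restricts to $\preceq_{F_i}$, so $u\prec_{F_i}v$ gives $u\prec_{\mathcal{F}}v$; and by Lemma~\ref{lem:int_dis_forest_family} at most one forest of $\mathcal{F}$ contains the $u_i$--$v_i$ path, whence $u_i\mathcal{F}v_i$ is well defined and equals the path $u_iF_mv_i$ inside the unique forest $F_m$ that produced $(u_i,v_i)$. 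Monochromaticity (d), the colour identity (c), and the internal--vertex colour condition (e) therefore transfer verbatim from $P_m$, since the internal vertices of $u_i\mathcal{F}v_i$ and of $u_iF_mv_i$ coincide.

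The real content lies in (a) and (f). For (a): the root of any tree is the maximum element of that tree's post-order, so by property (b) applied to the per-forest pairs it can never be a first coordinate $u_i$ (that would force a strictly larger second coordinate in the same tree). By the structural fact, along any maximal run $F_a,\ldots,F_b$ of consecutive forests containing a given vertex $w$, the vertex $w$ is a root of a tree in each of $F_{a+1},\ldots,F_b$, hence a first coordinate in at most the single set $P_a$. Consequently all first coordinates occurring in $\bigcup_i P_i$ are pairwise distinct, which in particular makes the union disjoint and of the claimed size $L$. For (f): suppose $(u_i,v_i)\in P_m$ and $(u_j,v_j)\in P_n$ with $u_iv_i,u_jv_j\in M$, $i\ne j$, and the paths $u_i\mathcal{F}v_i$, $u_j\mathcal{F}v_j$ share an internal vertex $w$. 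Then $w\in V(F_m)\cap V(F_n)$, so $|m-n|\le 1$. If $m=n$ this contradicts Lemma~\ref{lem:genrepforest}(f). If $|m-n|=1$, then by the structural fact $w$ is a leaf of some tree in the earlier of $F_m,F_n$, hence has degree one there and cannot be an internal vertex of any path in that forest --- contradicting that $w$ is internal to the path of the pair coming from the earlier forest.

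The one place that genuinely needs care is the description of the overlaps between consecutive forests and its two uses: that an overlap vertex is a ``fresh'' first coordinate in at most one $P_i$ (needed for (a)), and that it is a leaf --- hence never path-internal --- in the earlier forest (needed for (f)); everything else is bookkeeping routed through Lemma~\ref{lem:int_dis_forest_family}. An equivalent presentation is induction on $l$: split $\mathcal{F}$ into $F_1$ and the cascading sequence $\{F_2,\ldots,F_l\}$, apply Lemma~\ref{lem:genrepforest} to $F_1$ and the induction hypothesis to the shorter sequence, and combine; the verification is the same, and I would expect the direct ``union over all $F_i$'' version to read more cleanly.
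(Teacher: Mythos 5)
Your proposal is correct and follows essentially the same route as the paper: the paper's proof also takes the union of the pair sets obtained from Lemma~\ref{lem:genrepforest} applied to each forest, and justifies (f) by noting that the limited overlap between consecutive forests means a vertex can be path-internal in at most one of them. Your write-up simply makes explicit the details (overlap vertices are leaves in the earlier forest and roots in the later one) that the paper leaves as "easily verified."
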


\begin{proof}
The proof follows by taking union of pairs satisfying Lemma \ref{lem:genrepforest} for each forest in the collection $\mathcal{F}$. The properties (a)-(e) are easily verified. The property (f) holds for the union of pairs because the limited intersection of trees between different forests implies that a vertex can be an internal vertex in at most one forest.
\end{proof}

We will exhibit a cascading sequence of forests in $G\backslash M$ which essentially link up all the non matching color components.

\begin{lemma}\label{lem:int_disjoint_forests}
There exists a collection $\mathcal{F}$ of cascading sequence of forests in $G\backslash M$ such that:
\begin{enumerate}[{\rm (a)}]
\item $\sum_{F\in \mathcal{F}} |l(F)| = \sum_{i=1}^h (k_i - 1)$.
\item ${\rm Int}(\mathcal{F})\cap V(H)=\emptyset$ for all non-matching colour components $H$.
\item for all $F\in \mathcal{F}$ and for all $u\in r(F)\cup l(F)$, the colour of edges in $F$ incident at $u$ is the same as the matching colour at $u$.
\end{enumerate}
\end{lemma}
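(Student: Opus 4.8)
The plan is to handle the components of $G\backslash M$ one at a time and concatenate. Since distinct components of $G\backslash M$ are vertex-disjoint, any tree drawn from one component is disjoint from any tree drawn from another, hence the two form a cascading pair automatically; so if for each component $C_i$ we produce a cascading sequence of forests living inside $C_i$ with total leaf count $k_i-1$, with every internal vertex avoiding all non-matching colour components, and with the root/leaf colour property of (c), then concatenating these sequences in any order gives a single cascading sequence $\mathcal{F}$ satisfying (a)--(c). So fix one component $C$, write $H_1,\dots,H_k$ for the non-matching colour components inside $C$ and $W:=V(C)\setminus\bigcup_j V(H_j)$; if $k\le 1$ the empty sequence works, so assume $k\ge 2$.

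First I would build an auxiliary rooted tree on $\{H_1,\dots,H_k\}$. Contract each $H_j$ in $C$ to a single vertex $h_j$; the resulting graph $C'$ is connected, so it has a spanning tree, and after repeatedly deleting degree-one vertices lying in $W$ we obtain a tree $\mathcal{S}$ that still spans $\{h_1,\dots,h_k\}$ and all of whose leaves are among the $h_j$. Root $\mathcal{S}$ at $h_1$. For $j\ne 1$ let $\mathrm{par}(j)$ be the last $h$-vertex strictly before $h_j$ on the $h_j$--$h_1$ path in $\mathcal{S}$; the piece of $\mathcal{S}$ between $h_{\mathrm{par}(j)}$ and $h_j$ has all internal vertices in $W$, so un-contracting it yields a path $\pi_j$ in $C$ from a vertex of $H_{\mathrm{par}(j)}$ to a vertex of $H_j$ whose internal vertices all lie in $W$. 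The pairs $(\mathrm{par}(j),j)$ form a tree $A_0$ on $\{1,\dots,k\}$ rooted at $1$; let $\mathrm{lev}(j)$ denote its depth in $A_0$ (so $\mathrm{lev}(1)=1$) and put $\ell=\max_j\mathrm{lev}(j)$.

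Next, for each $p$ I would bundle the paths $\{\pi_j:\mathrm{par}(j)=p\}$ according to the vertex of $H_p$ at which they leave $H_p$: the union of the paths in one bundle is a tree in $C$ rooted at that exit vertex, whose leaves are the far endpoints of the participating $\pi_j$ (one leaf per $j$, lying in $H_j$). Because $\mathcal{S}$ is a tree, every $W$-vertex lies only on pieces of $\mathcal{S}$ that share a common subpath back to a single $h$-vertex, hence belongs to exactly one of these trees. Therefore, letting $F_t$ be the collection of all such trees whose root lies in a level-$t$ component, each $F_t$ is genuinely a forest, and we set $\mathcal{F}=(F_1,\dots,F_{\ell-1})$. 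Each index $j\ne 1$ contributes exactly one leaf, belonging to $F_{\mathrm{lev}(j)-1}$, so $\sum_t |l(F_t)|=k-1$, which is (a).

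Finally, the verification. Condition (b) is immediate since every internal vertex of every tree lies in $W$. For (c): whenever a path leaves a non-matching colour component $H$ at a vertex $u$, the vertex $u$ already witnesses the (non-matching) colour of $H$ and, via its matching edge, the colour $\mcl{u}$; the outgoing edge cannot carry the colour of $H$ (its other endpoint is outside $H$) nor a third colour, so it is coloured $\mcl{u}$, and this applies to every root and every leaf of every tree. For the cascading conditions, observe that $V(F_t)$ is contained in the union of the level-$t$ components, the level-$(t+1)$ components, and the $W$-vertices ``owned'' by level $t$; these three families are pairwise disjoint for levels differing by at least $2$, giving $V(F_s)\cap V(F_t)=\emptyset$ when $|s-t|\ge 2$, while for consecutive levels a tree of $F_s$ meets the level-$(s+1)$ components only in its leaves and a tree of $F_{s+1}$ meets them only in its root, so any two such trees are either disjoint or share exactly one vertex that is simultaneously a leaf of the earlier tree and the root of the later one --- precisely a cascading pair. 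I expect this last point to be the crux: reconciling within-forest vertex-disjointness with the ``at most one shared vertex'' condition across consecutive forests. The two devices that make it go through are passing to a \emph{tree} of the contracted component (so the $W$-vertices partition cleanly among the levels) and grouping the connecting paths by their exit vertex (so the trees inside a single forest are actually disjoint, not merely sharing endpoints).
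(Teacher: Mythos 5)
Your proof is correct and follows essentially the same strategy as the paper: work per component of $G\backslash M$, grow a layered (cascading) sequence of forests outward from a designated root component $H_1$, keep all internal vertices outside the non-matching colour components so that (b) holds, and obtain (c) from the standard observation that any edge leaving a non-matching colour component at $u$ must be coloured $\mcl{u}$. The only difference is implementational --- you realize each layer explicitly via a pruned spanning tree of the contracted component, a parent/level relation on the $H_j$'s, and bundling of connecting paths by exit vertex, whereas the paper iteratively takes maximal forests from visited to unvisited components; your version is, if anything, more careful about why the trees within a layer are vertex-disjoint and why consecutive layers meet only in leaf/root pairs.
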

\begin{proof}
For each $m=1,\ldots,h$, we will exhibit a cascading sequence of forests $\mathcal{F}_m$ in $C_m$ satisfying $\sum_{F\in \mathcal{F}_m} |l(F)| = k_m - 1$ and ${\rm Int}(\mathcal{F}_m)\cap H=\emptyset$ for all non-matching color components $H$ contained in $C_m$. Then it is easily seen that $\mathcal{F} := \cup_{m=1}^h \mathcal{F}_m$ satisfies the requirements of the lemma. Let $C := C_m$ be an arbitrary but fixed component of $G\backslash M$. Let $H_1,\ldots, H_k$ (where $k = k_m$) be the non-matching color components contained in $C$. Let $V_H := \cup_{i=1}^k V(H_i)$. Let $F_1$ be a maximal forest with i) $r(F_1)\subseteq V(H_1)$, ii) $|F_1 \cap H_j| \leq 1$ for $j \in [k] \setminus \{1\}$ iii) $|l(F_1)|$ is maximal under this condition. Suppose forests $F_1,\ldots, F_j$ have been constructed such that they form a cascading sequence. Define $\mathcal{F'}_j := \{F_1,\ldots, F_j\}$, $I_j := \{i: V(\mathcal{F'}_j)\cap V(H_i)\neq \emptyset\}$ and let $K_j := [k]\backslash I_j$. Intuitively, $I_j$ is the indices of non-matching color components already visited by the collection ${\cal F}^{'}_j$, while $K_j$ is the indices which have not been visited so far. Let $F_{j+1}$ be the maximal forest in $C$ such that $r(F_{j+1})\subseteq \cup_{i\in I_j} V(H_i)$, $l(F_{j+1})\subseteq \cup_{i\in K_j} V(H_i)$ and ${\rm Int}(F_{j+1})\cap V_H = \emptyset$. Note that by connectedness of $C$, $F_{j+1}$ is non-empty for $I_j\neq [k]$. Thus $I_j \subset I_{j+1}$ unless $I_j = [k]$. Let $t$ be minimal such that $I_t = [k]$. Then $\mathcal{F}_m = \{F_1,\ldots, F_t\}$ is a cascading sequence of forests in $C$ with ${\rm Int}(\mathcal{F}_m)\cap V_H = \emptyset$. To complete the proof, we need to show that $\sum_{j=1}^t |l(F_i)| = k - 1$. This follows from the fact that for each $i\in \{2,\ldots,k\}$, there exists minimum $j$ such that $i\in I_j$. Then $l(F_j)\cap V(H_i)\neq \emptyset$. In other words, for each $i\in \{2,\ldots,k\}$, there exists a forest $F_j$ which has a leaf in the component $H_i$. The claim follows.   
\end{proof}

The above proof can be intuitively understood using Figure \ref{fig:noroot}, where the forest $F_1$ consists of two dashed trees, and the forest $F_2$ consists of the two solid trees. The collection $\mathcal{F}=\{F_1,F_2\}$ is the cascading sequence for the arrangement in Figure \ref{fig:noroot}.

\section{Repetition Content}
\label{sec:repetcontent}
In the previous section we exhibited pairs of vertices having the same matching colour incident at them. Intuitively, a lot of such pairs should imply certain amount of repetition of colours among matching edges. In this section we attempt to quantitatively estimate the repetition.

Call a set $S\subseteq V(G)$ $M$-\emph{monochromatic} if all the edges in $M$ incident with vertices in $S$ have the same colour. Let $i(S; M)$ denote the number of edges in $M$ which are incident with a vertex in $S$. For an $M$-monochromatic set $S$, let \emph{repetition content} of $S$, denoted by $\rp{S}$ be defined as $\rp{S} := i(S; M) - 1$. For $|S| = k$, we observe that $(k-2)/2 \leq \rp{S} \leq k-1$. The lower bound is attained when every vertex in $S$ has its matching partner also in $S$. The upper bound is obtained for sets where no edge in $M$ has both vertices in $S$. Following is an easy consequence of the definition of repetition content.

\begin{lemma}\label{lem:repcontent}
For $j\in \mathcal{C}_M$ (the set of matching colours in ${\cal C}$), let $S_1, S_2, \dots S_{|\mathcal{C}_M|}$ be a collection of $M$-monochromatic sets such that matching edges incident with $S_j$ are coloured $j$. Then $|\mathcal{C}_M|\leq |M|-\sum_{j\in \mathcal{C}_M} \rp{S_j}$.
\end{lemma}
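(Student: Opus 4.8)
The plan is to prove Lemma~\ref{lem:repcontent} by a direct counting argument on the matching $M$, partitioned according to the colours in $\mathcal{C}_M$. First I would set up the counting: the matching colours in $\mathcal{C}_M$ induce a partition of $M$ into colour classes $M_j := \{e \in M : \cl{e} = j\}$ for $j \in \mathcal{C}_M$, and since every edge of $M$ receives some colour (which may or may not lie in $\mathcal{C}_M$ — but by definition of $\mathcal{C}_M$ as the colours used on $M$, in fact every matching edge's colour is in $\mathcal{C}_M$), we have $\sum_{j \in \mathcal{C}_M} |M_j| = |M|$. Here $|M_j| \ge 1$ for each $j \in \mathcal{C}_M$, so trivially $|\mathcal{C}_M| \le |M|$; the content of the lemma is the sharper bound obtained by relating $|M_j|$ to $\rp{S_j}$.

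The key observation is that for each $j$, all the matching edges incident with the $M$-monochromatic set $S_j$ are coloured $j$, so these edges all belong to $M_j$; hence $i(S_j; M) \le |M_j|$, which gives $\rp{S_j} = i(S_j;M) - 1 \le |M_j| - 1$. Summing over $j \in \mathcal{C}_M$ yields
\[
\sum_{j \in \mathcal{C}_M} \rp{S_j} \;\le\; \sum_{j \in \mathcal{C}_M} (|M_j| - 1) \;=\; |M| - |\mathcal{C}_M|,
\]
and rearranging gives $|\mathcal{C}_M| \le |M| - \sum_{j \in \mathcal{C}_M} \rp{S_j}$, as desired. The step that requires a moment's care is confirming that the edges counted by $i(S_j; M)$ genuinely land inside $M_j$: this is immediate from the hypothesis that matching edges incident with $S_j$ are coloured $j$ together with the fact that $S_j$ is $M$-monochromatic (so there is no ambiguity about "the" colour). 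I should also note that distinctness of the colours $j$ makes the classes $M_j$ pairwise disjoint, so the sum $\sum_j |M_j|$ is a genuine partition sum equal to $|M|$ and no edge is double-counted.

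There is really no serious obstacle here — the lemma is an accounting identity — but the one place where I would be careful is whether every matching edge's colour lies in $\mathcal{C}_M$. Since $\mathcal{C}_M$ is defined precisely as the set of colours appearing on edges of $M$, this holds by definition, so $\{M_j\}_{j \in \mathcal{C}_M}$ does partition $M$; if instead one only knew the weaker $\sum_j |M_j| \le |M|$ the same chain of inequalities would still go through, so the argument is robust. I would present the proof in three short sentences: define the classes $M_j$, observe $i(S_j;M) \le |M_j|$, and sum.
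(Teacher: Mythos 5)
Your proof is correct: partitioning $M$ into colour classes $M_j$, noting $i(S_j;M)\leq |M_j|$ so that $\rp{S_j}\leq |M_j|-1$, and summing over $j\in\mathcal{C}_M$ gives exactly the claimed bound (and the inequality even survives the degenerate case $S_j=\emptyset$, where $\rp{S_j}=-1$). The paper states this lemma without proof, calling it ``an easy consequence of the definition of repetition content,'' and your counting argument is precisely the intended one.
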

 
\subsection{Repetition by matching colours}
We recall the notation from Section \ref{sec:keynotation}. Throughout the remainder of this section let $\mathcal{F}$ denote a cascading sequence of forests satisfying Lemma \ref{lem:int_disjoint_forests}. Let $V(\mathcal{F})$ denote the vertex set $\bigcup_{F\in \mathcal{F}} V(F)$. Then by Lemma \ref{lem:genrepforest_collection}, there exists set $P$ of repetition pairs of vertices $V(\mathcal{F})$ with $|P| = \sum_{i=1}^h (k_i-1)$, where $k_i$ is the number of non-matching colour components in component $C_i$ of $G\backslash M$.

We consider a partition of the pairs $P$ according to the incident matching colour. Let $P_j := \{(u,v)\in P: \mcl{u} = \mcl{v} = j\}$ for $j\in \mathcal{C}_M$. Let $C_j\subseteq P_j, j\in \mathcal{C}_M$ be the repetition pairs consisting of matching pairs, i.e. $C_j = \{(u,v)\in P_j: uv\in M\}$. Let $U_j := \Pi_1(P_j)$ and $V_j := \Pi_2(P_j)$ denote the projections of the set $P_j$ on first and second coordinate. Let $S_j := U_j\cup V_j$. We use $\rp{S_j}$ as the repetition associated with colour $j\in \mathcal{C}_M$. We call a colour $j\in \mathcal{C}_M$ as \emph{high} if $\rp{S_j}\geq (|P_j| - |C_j|)/2$. Otherwise, we call the colour as \emph{low}. Let $\mathcal{H}$ and $\mathcal{L}$ denote the high and low colours respectively.

\begin{lemma}\label{lem:repetitioncontent}
Let the sets $P, \mathcal{H}, \mathcal{L}$ and $P_j, C_j, U_j,V_j,S_j, j\in \mathcal{C}_M$ be as defined in this section. Then we have,
\begin{enumerate}[{\rm (a)}]
\item $|S_j|\geq |P_j|+1$ for all $j\in \mathcal{C}_M$. 
\item $\rp{S_j}\geq (|P_j| - 1)/2$ for all $j\in \mathcal{C}_M$.
\item If $C_j\neq \emptyset$ then $j\in \mathcal{H}$.
\item For $j\in \mathcal{L}$, all vertices in $S_j$ have their matching partner in $M$ also in $S_j$. Furthermore, $S_j$ has a unique maximum element $v^\ast$ with respect to ordering $\preceq_\mathcal{F}$.
\item $|S_j|\geq 4$ for $j\in {\cal L}$.
\end{enumerate}
\end{lemma}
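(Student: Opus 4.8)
\textbf{Proof proposal for Lemma~\ref{lem:repetitioncontent}.}

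The plan is to establish the five items essentially in order, since each later item leans on the earlier ones. For (a), I would argue that the map sending a pair $(u,v)\in P_j$ to its first coordinate $u$ is injective by property (a) of Lemma~\ref{lem:genrepforest_collection}, so $|U_j| = |P_j|$; then I would produce one more element of $S_j$ not in $U_j$. The natural candidate is a $\preceq_\mathcal{F}$-maximal element of $S_j$: if $v^\ast$ is maximal in $S_j$, it cannot be some $u_i$ (its pair-partner $v_i$ satisfies $u_i \prec_\mathcal{F} v_i$, contradicting maximality, once one checks $v_i \in S_j$ and $v_i \neq u_i$), so $v^\ast \in V_j \setminus U_j$, giving $|S_j| \ge |U_j| + 1 = |P_j| + 1$. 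For (b), recall $\rp{S_j} = i(S_j;M) - 1$ and $i(S_j;M) \ge |S_j|/2$ (each matching edge covers at most two vertices of $S_j$), so $\rp{S_j} \ge |S_j|/2 - 1 \ge (|P_j|+1)/2 - 1 = (|P_j|-1)/2$ by (a).

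For (c), suppose $C_j \neq \emptyset$, so there is a pair $(u,v) \in P_j$ with $uv \in M$. Then this single pair already ``uses up'' a matching edge internal to $S_j$; more to the point, I want to show $\rp{S_j} \ge (|P_j| - |C_j|)/2$, which would put $j \in \mathcal{H}$ by definition. The idea is to count matching edges incident with $S_j$ carefully: the $|C_j|$ pairs in $C_j$ each contribute a matching edge with both endpoints in $S_j$, while the remaining $|P_j| - |C_j|$ first-coordinates $u_i$ (with $u_iv_i \notin M$) each have a distinct matching edge, and these are distinct from the $C_j$-edges and from each other. Summing, $i(S_j;M) \ge |C_j| + \tfrac{1}{2}(\text{contribution from the rest})$; massaging this should yield $i(S_j;M) - 1 \ge (|P_j| - |C_j|)/2$ whenever $|C_j| \ge 1$. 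The clean way is probably: among the $|P_j|$ first coordinates, the $|C_j|$ of them lying in a $C_j$-pair share matching edges with their (in-$S_j$) partners, so the number of distinct matching edges incident with $S_j$ is at least $|C_j| + (|P_j| - |C_j|) - (\text{overcount})$, and then bound the overcount using the parity/covering argument. This is the item I expect to require the most care, because one has to track exactly how the matching edges counted by $C_j$ overlap with those counted by the other $u_i$'s, and property (f) of Lemma~\ref{lem:genrepforest_collection} (the $C_j$-paths are internally disjoint) is likely the ingredient that prevents double counting.

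For (d), take $j \in \mathcal{L}$; by (c) we have $C_j = \emptyset$, so every pair $(u_i,v_i) \in P_j$ has $u_iv_i \notin M$. Now if some vertex $w \in S_j$ had its matching partner $w'$ \emph{outside} $S_j$, then the edge $ww' \in M$ is incident with $S_j$ but ``wastes no sharing'', and I would argue this forces $i(S_j;M)$ large enough that $\rp{S_j} \ge (|P_j|-1)/2 \ge |P_j|/2 = (|P_j| - |C_j|)/2$, contradicting $j \in \mathcal{L}$. Concretely: if $S_j$ has $a$ vertices whose partner is inside $S_j$ and $b$ whose partner is outside, then $|S_j| = a + b$ and $i(S_j;M) = a/2 + b$, so $\rp{S_j} = a/2 + b - 1$; combined with $|S_j| = a+b \ge |P_j| + 1$ from (a), we get $\rp{S_j} = |S_j| - a/2 - 1 \ge |P_j| + 1 - a/2 - 1 = |P_j| - a/2$, and if $b \ge 1$ then $a \le |S_j| - 1$ doesn't immediately close it, so the sharper route is: $\mathcal{L}$ means $\rp{S_j} < |P_j|/2$, i.e. $a/2 + b - 1 < |P_j|/2 \le (a+b-1)/2$, which simplifies to $b < 1$, hence $b = 0$. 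For the uniqueness of the maximum: the maximal element $v^\ast$ produced in (a) lies in $V_j$; if there were a second $\preceq_\mathcal{F}$-maximal element, it would be incomparable to $v^\ast$, but one can use the structure of the cascading sequence together with the fact (just established) that $S_j$ is closed under taking matching partners to derive a contradiction — the two would sit in different forests or different trees, and tracing a $P_j$-path would force a common ancestor. Finally (e) is immediate: for $j \in \mathcal{L}$ we have $\rp{S_j} < |P_j|/2$ and (b) gives $\rp{S_j} \ge (|P_j|-1)/2$, forcing $|P_j| = 0$ or handling small cases directly; but also $\mathcal{L}$ being nonempty for colour $j$ means $P_j \neq \emptyset$, so $|P_j| \ge 1$, and by (d) all of $S_j$ pairs up under $M$ so $|S_j|$ is even, and $|S_j| \ge |P_j| + 1 \ge 2$; ruling out $|S_j| = 2$ requires noting that $|S_j| = 2$ with $S_j = \{u, v\}$, $uv \in M$ would put the single pair $(u,v)$ in $C_j$, contradicting $C_j = \emptyset$, so $|S_j| \ge 4$.
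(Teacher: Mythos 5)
Your items (a), (b), and (e) match the paper's proof, and your $a$/$b$ counting for the first half of (d) (every vertex of $S_j$ has its matching partner in $S_j$) is a correct, slightly more explicit version of the paper's argument. Two places, however, fall short of a proof. First, for (c) you set out to show $\rp{S_j}\geq (|P_j|-|C_j|)/2$ by a careful count of matching edges and leave it at ``massaging this should yield'' the bound; that count is never completed, and it is also unnecessary: if $C_j\neq\emptyset$ then $|C_j|\geq 1$, so $(|P_j|-|C_j|)/2\leq (|P_j|-1)/2\leq \rp{S_j}$ by your own item (b), and $j\in\mathcal{H}$ follows immediately from the definition of high. This one-line deduction is exactly what the paper does.

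Second, and this is the genuine gap: the uniqueness of the maximum element in (d) is asserted rather than derived. You appeal to ``the structure of the cascading sequence'' and a common-ancestor argument, but maximality of two elements of a poset gives no contradiction by itself --- two maximal elements sitting in different trees are simply incomparable, and nothing forces a $P_j$-path between them. The correct route is cardinality, and you already have the ingredients: your inequality $a/2+b-1<|P_j|/2\leq(a+b-1)/2$ together with $b=0$ forces $|S_j|=a=|P_j|+1$, hence $S_j=U_j\cup\{v^\ast\}$ where $v^\ast$ is the single element of $V_j\setminus U_j$. Since each $u_i\in U_j$ satisfies $u_i\prec_\mathcal{F} v_i$ with $v_i\in S_j$, no element of $U_j$ is maximal in $S_j$, so $v^\ast$ is the unique maximum. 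You should make this step explicit; as written, (d) is only half proved.
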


\begin{proof}
From part (a) of Lemma \ref{lem:genrepforest_collection}, we have $|U_j|=|P_j|$. Let $v^\ast$ be a maximal element in $S_j$ under $\preceq_\mathcal{F}$. Clearly, by part (b) of Lemma \ref{lem:genrepforest_collection}, $v^\ast\not\in U_j$. Thus $|S_j|\geq |U_j|+1 = |P_j|+1$. This proves claim (a). Since $\rp{S}\geq (|S|-2)/2$ for all $S\subseteq V(G)$, it follows using claim (a) that $\rp{S_j}\geq (|P_j|-1)/2$ for all $j\in \mathcal{C}_M$. The claim (b) is thus proved. Claim (c) also follows from claim (b) and definition of $\mathcal{H}$. For claim (d), observe that for $j\in \mathcal{L}$, we have by claim (c), $|C_j|=0$, and hence $\rp{S_j}<|P_j|/2$ from the definition of low. Also by claim (b), we have $\rp{S_j}\geq (|P_j|-1)/2$. Since $\rp{S_j}$ is an integer, we conclude $\rp{S_j}=(|P_j|-1)/2$, and hence $|S_j|\leq |P_j|+1$. Since no element $U_j$ can be maximum, together with claim (a), we conclude $|S_j| = |P_j|+1$ and thus $S_j = U_j\cup \{v^\ast\}$. Clearly then $v^\ast$ is the unique maximal element in $S_j$. Also since $S_j$ satisfies $\rp{S_j} = (|S_j|-2)/2$, each element in $S_j$ must have its matching partner in $M$ in $S_j$. This proves claim (d). From claim (d) we see that $|S_j|\geq 4$ or $|S_j|=2$ for $j\in {\cal L}$. To prove claim (e), we rule out the possibility $|S_j|=2$. Note that $|S_j|=2$ implies $P_j=\{(u,v)\}$ where $uv\in M$. But then $C_j=\{(u,v)\}$, which contradicts the assumption $j\in {\cal L}$ by claim (c). This proves claim (e). 
\end{proof}

\begin{lemma}\label{lem:rctrianglefree}
If $G$ is triangle free and $j\in \mathcal{L}$ we have either (a) $|S_j|\geq 6$, or (b)  $|S_j|=4$ and there exist vertices $u,v$ in $S_j$ with $(u,v)\in P_j$ such that the path $u\mathcal{F}v$ has an interior vertex $z_{uv}$. 
\end{lemma}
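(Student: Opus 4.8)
The plan is to reduce at once to the case $|S_j| = 4$ and then force the required interior vertex by a short triangle-freeness argument. First I would collect the facts about a low colour $j$ provided by Lemma~\ref{lem:repetitioncontent}: $C_j = \emptyset$ (part (c)); $S_j = U_j \cup \{v^\ast\}$ with $v^\ast \notin U_j$ the unique $\preceq_\mathcal{F}$-maximum and $|S_j| = |P_j|+1$ (part (d)); and every vertex of $S_j$ has its $M$-partner inside $S_j$ (part (d)), so $|S_j|$ is even. Since $|S_j|\geq 4$ (part (e)), either $|S_j|\geq 6$, which is conclusion (a), or $|S_j| = 4$, the case that remains.

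Assume $|S_j| = 4$, so $|P_j| = |U_j| = 3$ (the latter by Lemma~\ref{lem:genrepforest_collection}(a)); write $U_j = \{a,b,c\}$, hence $S_j = \{a,b,c,v^\ast\}$. As $S_j$ is the disjoint union of two $M$-edges and $v^\ast$'s partner lies in $U_j$, after relabelling I may assume $cv^\ast\in M$ and $ab\in M$. Next I pin down the repetition pairs using $C_j = \emptyset$ together with $u_i\prec_\mathcal{F} v_i$ and antisymmetry of $\preceq_\mathcal{F}$: for $(c,v_c)$, $v_c\neq v^\ast$ (else $(c,v_c)\in C_j$), so $v_c\in\{a,b\}$, and relabelling within $\{a,b\}$ I may take $v_c = a$, giving $c\prec_\mathcal{F} a$; for $(a,v_a)$, $v_a\neq b$ (else $(a,v_a)\in C_j$) and $v_a\neq c$ (else $a\prec_\mathcal{F} c$ contradicts $c\prec_\mathcal{F} a$), so $v_a = v^\ast$.

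Finally, suppose for contradiction that for every pair $(u,v)\in P_j$ the path $u\mathcal{F}v$ is a single edge. Then $c\mathcal{F}a$ and $a\mathcal{F}v^\ast$ are edges $ca, av^\ast\in E(G\backslash M)\subseteq E(G)$, and with $cv^\ast\in M\subseteq E(G)$ the three distinct vertices $a, c, v^\ast$ span a triangle in $G$, contradicting triangle-freeness. Hence one of the pairs $(c,a), (a,v^\ast)\in P_j$ has a path with an interior vertex $z_{uv}$, which (together with $|S_j| = 4$) is exactly conclusion (b).

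There is no serious obstacle; the delicate points are (i) reading off from Lemma~\ref{lem:repetitioncontent}(d) that $|S_j|$ is even -- so that failing conclusion (a) really pins us to $|S_j| = 4$ rather than to an arbitrary even value -- and (ii) carrying out the two successive relabellings consistently and using antisymmetry of $\preceq_\mathcal{F}$ to exclude $v_a = c$; the triangle then falls out immediately, and notably the third pair $(b,v_b)$ plays no role.
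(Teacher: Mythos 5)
Your proof is correct and follows essentially the same route as the paper: reduce to $|S_j|=4$ via parts (d) and (e) of Lemma \ref{lem:repetitioncontent}, identify the two repetition pairs $(c,a)$ and $(a,v^\ast)$ (the paper's $(u,z)$ and $(z,v^\ast)$ with $u$ the matching partner of $v^\ast$), and observe that if both paths were single edges then $a$, $c$, $v^\ast$ would span a triangle through the matching edge $cv^\ast$. Your derivation of which pairs lie in $P_j$ (using $C_j=\emptyset$ and antisymmetry of $\preceq_\mathcal{F}$) is in fact slightly more explicit than the paper's.
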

\begin{proof}
From parts (d) and (e) of Lemma \ref{lem:repetitioncontent} we see that if $|S_j|<6$, we must have $|S_j|=4$. Let $v^\ast$ be the maximum element of $S_j$ according to the order $\preceq_\mathcal{F}$. Clearly $v^\ast\not\in U_j$. Let $u$ be the matching partner of $v^\ast$. Clearly $(u,v^\ast)\not\in P_j$ as $C_j=\emptyset$. Let $z$ be a vertex such that $(u,z)\in P_j$ and let $w$ be the matching partner of $z$. Recalling  $v^*$ is the unique maximum element of $S_j$, $S_j = U_j \cup \{v^*\}$. Thus $S_j=\{u,z,w,v^\ast\}$. We also must have $(z,v^\ast)\in P_j$. Since $(u,z)\in P_j$ and $(z,v^\ast)\in P_j$, by part (e) of Lemma \ref{lem:genrepforest_collection}, $w\not\in u{\cal F}z$ and $w\not\in z\mathcal{F}v^\ast$. Thus $w\not\in u\mathcal{F}v^\ast$. Since $G$ is triangle free, one of the paths $u\mathcal{F}z, z\mathcal{F}v^\ast$ must have an internal vertex, thus satisfying condition (b) of the lemma. 
\end{proof}

\begin{lemma}\label{lem:distinct_int_vertices}
Let $(u,v)\in P_i$ and $(z,w)\in P_j$ for some $i\neq j\in \mathcal{C}_M$. Then the paths $u\mathcal{F}v$ and $z\mathcal{F}w$ do not share an internal vertex.
\end{lemma}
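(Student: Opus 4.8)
The plan is to argue by contradiction and reduce everything to the single constraint that $\mathcal{C}$ is an edge $2$-colouring. Suppose $(u,v)\in P_i$, $(z,w)\in P_j$ with $i\neq j$, and suppose towards a contradiction that the paths $u\mathcal{F}v$ and $z\mathcal{F}w$ share a common internal vertex $x$. (If either path has no internal vertex there is nothing to prove, so we may assume such an $x$ exists; both paths themselves exist by part~(b) of Lemma~\ref{lem:genrepforest_collection}.)

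Next I would list the colours incident at $x$. By part~(d) of Lemma~\ref{lem:genrepforest_collection} the path $u\mathcal{F}v$ is monochromatic with every edge coloured $\mcl{u}$, and $\mcl{u}=i$ because $(u,v)\in P_i$; hence the two path-edges of $u\mathcal{F}v$ meeting $x$ are coloured $i$. By the same reasoning the two path-edges of $z\mathcal{F}w$ meeting $x$ are coloured $j$. Since $i\neq j$ these are four distinct edges lying in $G\backslash M$, so both colours $i$ and $j$ already appear at $x$. The matching edge at $x$ lies in $M$, hence is distinct from all four path-edges; and applying the ``internal vertex'' clause~(e) of Lemma~\ref{lem:genrepforest_collection} to $u\mathcal{F}v$ gives $\mcl{x}\neq\mcl{u}=i$, while applying it to $z\mathcal{F}w$ gives $\mcl{x}\neq\mcl{z}=j$. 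Thus edges of three distinct colours $i$, $j$ and $\mcl{x}$ are incident at $x$, contradicting that $\mathcal{C}$ uses at most two colours at every vertex. This contradiction finishes the argument.

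The argument is short and I do not expect a genuine obstacle; the only points needing care are remembering that both paths live in $G\backslash M$, so the matching edge at $x$ is a genuinely separate edge whose colour we control via clause~(e), and applying clause~(e) to \emph{both} paths to force $\mcl{x}\notin\{i,j\}$. I would also remark that $u\mathcal{F}v$ and $z\mathcal{F}w$ need not lie in the same forest of the cascading sequence $\mathcal{F}$, but this is irrelevant: the whole contradiction is local to the vertex $x$ and uses nothing beyond the $q=2$ constraint.
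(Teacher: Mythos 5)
Your proof is correct and follows essentially the same route as the paper's: assume a common internal vertex $x$, note that the monochromatic paths force colours $i$ and $j$ to be incident at $x$, and invoke clause (e) of Lemma \ref{lem:genrepforest_collection} for both paths to make $\mcl{x}$ a forbidden third colour, contradicting the edge $2$-colouring constraint. The extra care you take in spelling out that the path edges lie in $G\backslash M$ and are therefore distinct from the matching edge at $x$ is implicit in the paper's one-line version but is the same argument.
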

\begin{proof}
If possible, let $x$ be an internal vertex of both $u\mathcal{F}v$ and $z\mathcal{F}w$. By part (e) of Lemma \ref{lem:genrepforest_collection}, $\mcl{x}\neq i$ and $\mcl{x}\neq j$. This is a contradiction, as $i$ and $j$ are the only two colors incident at $x$. Thus, the lemma is proved.
\end{proof}

Finally, we prove our main result:
\begin{proof}[Proof of Theorem \ref{thm:main}]
We have $|\mathcal{C}|=|\mathcal{C}_M|+|\mathcal{C}_N|$. From Lemma \ref{lem:repcontent}, we have,
\begin{equation}\label{eq:first}
|\mathcal{C}|\leq |M|-\sum_{j\in \mathcal{C}_M } \rp{S_j}+|\mathcal{C}_N|.
\end{equation}
Moreover from Lemma \ref{lem:repetitioncontent}(b) we have,
\begin{align}\label{eq:second}
\sum_{j\in \mathcal{C}_M} \rp{S_j} &= \sum_{j\in \mathcal{H}} \rp{S_j} + \sum_{j\in \mathcal{L}} \rp{S_j} \nonumber \\
& \geq \sum_{j\in \mathcal{H}} \frac{|P_j| - |C_j|}{2} + \sum_{j\in \mathcal{L}} \frac{|P_j|-1}{2} \nonumber \\
&= \sum_{j\in \mathcal{C}_M} \frac{|P_j|}{2} - \sum_{j\in \mathcal{H}}\frac{|C_j|}{2} - \frac{|\mathcal{L}|}{2}
\end{align}
Observing that $\sum_{j\in \mathcal{C}_M} |P_j| = \sum_{i=1}^h (k_i-1) = |\mathcal{C}_N| - h$, from \eqref{eq:first} and \eqref{eq:second}, we have,
\begin{align}\label{eq:third}
|\mathcal{C}| & \leq |\mathcal{C}_N| + |M| - \frac{|\mathcal{C}_N| - h}{2} + 
\frac{\sum_{j\in \mathcal{H}} |C_j|}{2} + \frac{|\mathcal{L}|}{2}
\end{align}
Let $\Delta := \sum_{j\in \mathcal{H}} |C_j|$. Note that the path $u\mathcal{F}v$ contains at least one internal vertex for $(u,v)\in \bigcup_{j\in \mathcal{H}}
C_j$. From Lemma \ref{lem:distinct_int_vertices} and part (f) of Lemma \ref{lem:genrepforest_collection} it follows that all these internal vertices are distinct. Thus, the collection $\mathcal{F}$ has at least $\Delta$ internal vertices. Since $\mathcal{F}$ was chosen according to Lemma \ref{lem:int_disjoint_forests} and each non-matching color component contains at least two vertices, we have:
\begin{equation}\label{eq:fourth}
2|\mathcal{C}_N| + \Delta \leq n.
\end{equation}
Substituting, $|\mathcal{C}_N|\leq (n-\Delta)/2$ in \eqref{eq:third}, we have:
\begin{equation}\label{eq:fifth}
|\mathcal{C}| \leq \frac{3|M|}{2} + \frac{\Delta + 2|{\cal L}|}{4} + \frac{h}{2}
\end{equation}
Since each $j\in \mathcal{L}$ determines at least $2$ matching edges of the same color we have $|\mathcal{C}_M|\leq |M| - |\mathcal{L}|$. Combining with
\eqref{eq:fourth} we have:
\begin{equation}\label{eq:sixth}
|\mathcal{C}|\leq 2|M| - \frac{\Delta + 2|\mathcal{L}|}{2}.
\end{equation}
Equating the upperbounds on $|\mathcal{C}|$ in \eqref{eq:fifth} and \eqref{eq:sixth}, we obtain $|\mathcal{C}|\leq \frac{5}{3}(|M|+h)$. This completes the proof of part (a) of the theorem.

Now, consider the case when $G$ is additionally {\em triangle free}. In this case we write ${\cal L}={\cal L}_1\uplus {\cal L}_2$ where ${\cal L}_1 = \{j\in
{\cal L}: |S_j|\geq 6\}$ and ${\cal L}_2 = \{j\in {\cal L}: |S_j|=4\}$. Since each $j\in \mathcal{L}_1$ determines at least $3$ matching edges of the same color, and each $j\in \mathcal{L}_2$ determines at least $2$ matching edges of the same color, we conclude that $|\mathcal{C}_M|\leq |M| - 2|\mathcal{L}_1| - |\mathcal{L}_2|$. Together with trivial bound $|\mathcal{C}_N|\leq |M|$, we have:
\begin{equation}\label{eq:seventh}
|\mathcal{C}| \leq 2|M| - 2|\mathcal{L}_1| - |\mathcal{L}_2|.
\end{equation}
Also, the equivalent of Equation \eqref{eq:third} for this case can be written as:
\begin{align}\label{eq:eight}
|\mathcal{C}| & \leq |\mathcal{C}_N| + |M| - \frac{|\mathcal{C}_N| - h}{2} + \frac{\sum_{j\in \mathcal{H}} |C_j|}{2} + \frac{|\mathcal{L}_1| + |\mathcal{L}_2|}{2}
\end{align}
Finally, we exploit the triangle free property of the graph to account for more internal vertices. Observing that each $u{\cal F}v$ path now has at least two internal vertices for $(u,v)\in \bigcup_{j\in {\cal H}} C_j$, and further by Lemma \ref{lem:rctrianglefree}, for every $j\in \mathcal{L}_2$ there is a vertex $z_j$ which is an internal vertex of $\mathcal{F}$. As before, these internal vertices are distinct. Thus we have at least $2\Delta + |{\cal L}_2|$ internal vertices in ${\cal F}$. Thus we have:
\begin{equation}\label{eq:nine}
2|\mathcal{C}_N| + 2\Delta + |\mathcal{L}_2|\leq n.
\end{equation}
Substituting $|\mathcal{C}_N|\leq (n-2\Delta - |\mathcal{L}_2|)/2$ in \eqref{eq:eight} we get (recall $\Delta := \sum_{j\in {\cal H}} |C_j|$),
\begin{equation}\label{eq:ten}
|\mathcal{C}| \leq \frac{3|M|}{2} + \frac{2|\mathcal{L}_1|+ |\mathcal{L}_2|}{4} + \frac{h}{2}
\end{equation}
 From Equations \eqref{eq:seventh} and \eqref{eq:ten}, by equating the upper bounds, we get $|\mathcal{C}|\leq \frac{8}{5}(|M|+h)$, which proves the  part (b) of the theorem.
\end{proof}

\section{A Lower Bound for the Matching Based Algorithm}
In this section, we show a triangle free graph with perfect matching establishing a lower bound of $58/37$ on the approximability of Algorithm \ref{algo01} on such graphs. The factor $58/37$ sits roughly midway between the known lower bound of $3/2$ on the approximability of max edge $2$-coloring, and the
approximation factor of $8/5$ shown for Algorithm \ref{algo01} in this paper.

Let us consider the graph $G$ on $72$ vertices as shown in the Figure~\ref{fig:lowerbound}. Clearly, the graph has a perfect matching which is marked with the bold lines. If we delete the matching edges we get one connected component. So the Algorithm \ref{algo01} outputs $37$ colours. On the other hand if we colour the graph by below mentioned colouring scheme, we can use $58$ colours to colour the entire graph. 

\noindent \textbf{Colouring Scheme:} For every matched edges that are represented with a bold straight line in the Figure \ref{fig:lowerbound} and two adjacent edges on its either sides in the drawing, use different colours to colour them. For the top six edges represented with vertical straight lines in the drawing, colour them with different colours. Now delete all the coloured edges. The remaining graph has seven connected components. Colour all edges in a connected component with same colours.

With the above colouring scheme we use total of $3 \times 15 + 6 + 7 = 58$ colours. It is easy to verify that it is a valid colouring scheme.
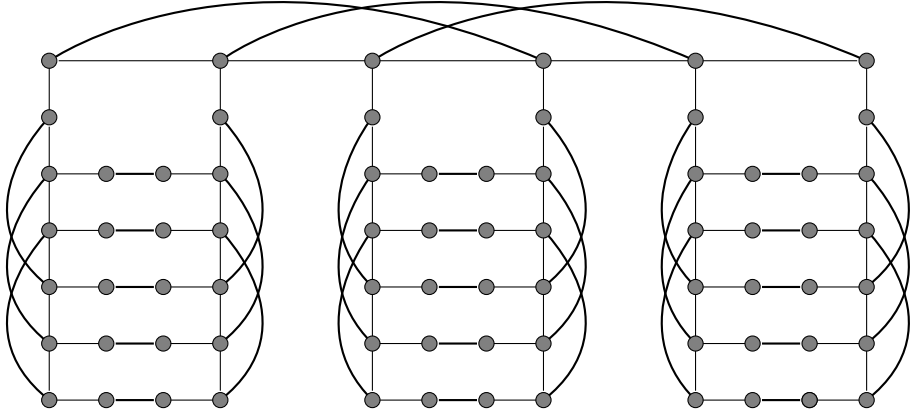
\begin{figure}[h]
\centering
\usetikzlibrary{decorations.pathmorphing}
\begin{tikzpicture}[scale = 0.5]

\draw (-2.5,3) node (v3) {} -- (-2.5,4.5) node (v1) {};
\draw (2,3) node (v5) {} -- (2,4.5);
\draw (-2.5,1.5) -- (-1,1.5) node (v15) {};
\draw (0.5,1.5) node (v16) {} -- (2,1.5);
\draw (-2.5,0) -- (-1,0) node (v17) {};
\draw (0.5,0) node (v18) {} -- (2,0);
\draw (-2.5,-1.5) -- (-1,-1.5) node (v19) {};
\draw (0.5,-1.5) node (v20) {} -- (2,-1.5);
\draw (-2.5,-3) -- (-1,-3) node (v23) {};
\draw (0.5,-3) node (v24) {} -- (2,-3);
\draw (-2.5,-4.5) node (v4) {} -- (-1,-4.5) node (v21) {};
\draw (0.5,-4.5) node (v22) {} -- (2,-4.5) node (v6) {};
\draw [thick](-2.5,3) .. controls (-2.5,3) and (-5,0.5) .. (-2.5,-1.5);
\draw [thick](-2.5,1.5) .. controls (-2.5,1.5) and (-5,-1) .. (-2.5,-3);
\draw [thick](-2.5,0) .. controls (-2.5,0) and (-5,-2.5) .. (-2.5,-4.5);
\draw [thick](2,3) .. controls (2,3) and (4.5,0.5) .. (2,-1.5);
\draw [thick](2,1.5) .. controls (2,1.5) and (4.5,-1) .. (2,-3);
\draw [thick](2,0) .. controls (2,0) and (4.5,-2.5) .. (2,-4.5);

\draw (6,4.5) -- (6,3) node (v7) {};
\draw (10.5,4.5) -- (10.5,3) node (v9) {};
\draw (6,1.5) -- (7.5,1.5) node (v25) {};
\draw (6,0) -- (7.5,0) node (v27) {};
\draw (6,-1.5) -- (7.5,-1.5) node (v29) {};
\draw (6,-3) -- (7.5,-3) node (v31) {};
\draw (6,-4.5) node (v8) {} -- (7.5,-4.5) node (v33) {};
\draw (9,1.5) node (v26) {} -- (10.5,1.5);
\draw (9,0) node (v28) {} -- (10.5,0);
\draw (9,-1.5) node (v30) {} -- (10.5,-1.5);
\draw (9,-3) node (v32) {} -- (10.5,-3);
\draw (9,-4.5) node (v34) {} -- (10.5,-4.5) node (v10) {};
\draw [thick](6,3) .. controls (6,3) and (4,0.5) .. (6,-1.5);
\draw [thick](6,1.5) .. controls (6,1.5) and (4,-1) .. (6,-3);
\draw [thick](6,0) .. controls (6,0) and (4,-2.5) .. (6,-4.5);
\draw [thick](10.5,3) .. controls (10.5,3) and (13,0.5) .. (10.5,-1.5);
\draw [thick](10.5,1.5) .. controls (10.5,1.5) and (13,-1) .. (10.5,-3);
\draw [thick](10.5,0) .. controls (10.5,0) and (13,-2.5) .. (10.5,-4.5);

\draw (14.5,0) -- (16,0) node (v37) {};
\draw (14.5,1.5) -- (16,1.5) node (v35) {};
\draw (14.5,-1.5) -- (16,-1.5) node (v39) {};
\draw (14.5,-3) -- (16,-3) node (v41) {};
\draw (14.5,-4.5) node (v12) {} -- (16,-4.5) node (v43) {};
\draw (17.5,1.5) node (v36) {} -- (19,1.5);
\draw (17.5,0) node (v38) {} -- (19,0);
\draw (17.5,-1.5) node (v40) {} -- (19,-1.5);
\draw (17.5,-3) node (v42) {} -- (19,-3);
\draw (17.5,-4.5) node (v44) {} -- (19,-4.5) node (v14) {};
\draw (14.5,4.5) -- (14.5,3) node (v11) {};
\draw (19,4.5) node (v2) {} -- (19,3) node (v13) {};
\draw [thick](14.5,3) .. controls (14.5,3) and (12.5,0.5) .. (14.5,-1.5);
\draw [thick](14.5,1.5) .. controls (14.5,1.5) and (12.5,-1) .. (14.5,-3);
\draw [thick](14.5,0) .. controls (14.5,0) and (12.5,-2.5) .. (14.5,-4.5);
\draw [thick](19,3) .. controls (19,3) and (21.5,0.5) .. (19,-1.5);
\draw [thick](19,1.5) .. controls (19,1.5) and (21.5,-1) .. (19,-3);
\draw [thick](19,0) .. controls (19,0) and (21.5,-2.5) .. (19,-4.5);
\draw [thick](-2.5,4.5) .. controls (-2.5,4.5) and (2.5,8) .. (10.5,4.5);
\draw [thick](2,4.5) .. controls (2,4.5) and (6.5,8) .. (14.5,4.5);
\draw [thick](6,4.5) .. controls (6,4.5) and (11,8) .. (19,4.5);

\draw (v1) -- (v2);
\draw (v3) -- (v4);
\draw (v5) -- (v6);
\draw (v7) -- (v8);
\draw (v9) -- (v10);
\draw (v11) -- (v12);
\draw (v13) -- (v14);

\draw [thick](v15) -- (v16);
\draw [thick](v17) -- (v18);
\draw [thick](v19) -- (v20);
\draw [thick](v21) -- (v22);
\draw [thick](v23) -- (v24);
\draw [thick](v25) -- (v26);
\draw [thick](v27) -- (v28);
\draw [thick](v29) -- (v30);
\draw [thick](v31) -- (v32);
\draw [thick](v33) -- (v34);
\draw [thick](v35) -- (v36);
\draw [thick](v37) -- (v38);
\draw [thick](v39) -- (v40);
\draw [thick](v41) -- (v42);
\draw [thick](v43) -- (v44);

\draw [fill=gray] (v1) ellipse (0.2 and 0.2);
\draw [fill=gray] (v2) ellipse (0.2 and 0.2);
\draw [fill=gray] (v3) ellipse (0.2 and 0.2);
\draw [fill=gray] (v4) ellipse (0.2 and 0.2);
\draw [fill=gray] (v5) ellipse (0.2 and 0.2);
\draw [fill=gray] (v6) ellipse (0.2 and 0.2);
\draw [fill=gray] (v7) ellipse (0.2 and 0.2);
\draw [fill=gray] (v8) ellipse (0.2 and 0.2);
\draw [fill=gray] (v9) ellipse (0.2 and 0.2);
\draw [fill=gray] (v10) ellipse (0.2 and 0.2);
\draw [fill=gray] (v11) ellipse (0.2 and 0.2);
\draw [fill=gray] (v12) ellipse (0.2 and 0.2);
\draw [fill=gray] (v13) ellipse (0.2 and 0.2);
\draw [fill=gray] (v14) ellipse (0.2 and 0.2);
\draw [fill=gray] (v15) ellipse (0.2 and 0.2);
\draw [fill=gray] (v16) ellipse (0.2 and 0.2);
\draw [fill=gray] (v17) ellipse (0.2 and 0.2);
\draw [fill=gray] (v18) ellipse (0.2 and 0.2);
\draw [fill=gray] (v19) ellipse (0.2 and 0.2);
\draw [fill=gray] (v20) ellipse (0.2 and 0.2);
\draw [fill=gray] (v21) ellipse (0.2 and 0.2);
\draw [fill=gray] (v22) ellipse (0.2 and 0.2);
\draw [fill=gray] (v23) ellipse (0.2 and 0.2);
\draw [fill=gray] (v24) ellipse (0.2 and 0.2);
\draw [fill=gray] (v25) ellipse (0.2 and 0.2);
\draw [fill=gray] (v26) ellipse (0.2 and 0.2);
\draw [fill=gray] (v27) ellipse (0.2 and 0.2);
\draw [fill=gray] (v28) ellipse (0.2 and 0.2);
\draw [fill=gray] (v29) ellipse (0.2 and 0.2);
\draw [fill=gray] (v30) ellipse (0.2 and 0.2);
\draw [fill=gray] (v31) ellipse (0.2 and 0.2);
\draw [fill=gray] (v32) ellipse (0.2 and 0.2);
\draw [fill=gray] (v33) ellipse (0.2 and 0.2);
\draw [fill=gray] (v34) ellipse (0.2 and 0.2);
\draw [fill=gray] (v35) ellipse (0.2 and 0.2);
\draw [fill=gray] (v36) ellipse (0.2 and 0.2);
\draw [fill=gray] (v37) ellipse (0.2 and 0.2);
\draw [fill=gray] (v38) ellipse (0.2 and 0.2);
\draw [fill=gray] (v39) ellipse (0.2 and 0.2);
\draw [fill=gray] (v40) ellipse (0.2 and 0.2);
\draw [fill=gray] (v41) ellipse (0.2 and 0.2);
\draw [fill=gray] (v42) ellipse (0.2 and 0.2);
\draw [fill=gray] (v43) ellipse (0.2 and 0.2);
\draw [fill=gray] (v44) ellipse (0.2 and 0.2);

\draw [fill=gray] (2,4.5) ellipse (0.2 and 0.2);
\draw [fill=gray] (6,4.5) ellipse (0.2 and 0.2);
\draw [fill=gray] (10.5,4.5) ellipse (0.2 and 0.2);
\draw [fill=gray] (14.5,4.5) ellipse (0.2 and 0.2);
\draw [fill=gray] (-2.5,1.5) ellipse (0.2 and 0.2);
\draw [fill=gray] (-2.5,0) ellipse (0.2 and 0.2);
\draw [fill=gray] (-2.5,-1.5) ellipse (0.2 and 0.2);
\draw [fill=gray] (-2.5,-3) ellipse (0.2 and 0.2);
\draw [fill=gray] (2,1.5) ellipse (0.2 and 0.2);
\draw [fill=gray] (2,0) ellipse (0.2 and 0.2);
\draw [fill=gray] (2,-1.5) ellipse (0.2 and 0.2);
\draw [fill=gray] (2,-3) ellipse (0.2 and 0.2);
\draw [fill=gray] (6,1.5) ellipse (0.2 and 0.2);
\draw [fill=gray] (6,0) ellipse (0.2 and 0.2);
\draw [fill=gray] (6,-1.5) ellipse (0.2 and 0.2);
\draw [fill=gray] (6,-3) ellipse (0.2 and 0.2);
\draw [fill=gray] (10.5,1.5) ellipse (0.2 and 0.2);
\draw [fill=gray] (10.5,0) ellipse (0.2 and 0.2);
\draw [fill=gray] (10.5,-1.5) ellipse (0.2 and 0.2);
\draw [fill=gray] (10.5,-3) ellipse (0.2 and 0.2);
\draw [fill=gray] (14.5,1.5) ellipse (0.2 and 0.2);
\draw [fill=gray] (14.5,0) ellipse (0.2 and 0.2);
\draw [fill=gray] (14.5,-1.5) ellipse (0.2 and 0.2);
\draw [fill=gray] (14.5,-3) ellipse (0.2 and 0.2);
\draw [fill=gray] (19,1.5) ellipse (0.2 and 0.2);
\draw [fill=gray] (19,0) ellipse (0.2 and 0.2);
\draw [fill=gray] (19,-1.5) ellipse (0.2 and 0.2);
\draw [fill=gray] (19,-3) ellipse (0.2 and 0.2);
\draw [fill=gray] (v44) ellipse (0.2 and 0.2);

\end{tikzpicture}
\caption{The Algorithm \ref{algo01} exhibits an approximation factor of $58/37$ when runs on the above graph that has a perfect matching. The matching edges are shown in bold lines.}
\label{fig:lowerbound}
\end{figure}

\bibliography{presubmission_else}
\bibliographystyle{alpha}
\nocite{*}
\end{document}